\theoremstyle{definition}
\newtheorem{definition}{Definition}
\theoremstyle{plain}
\newtheorem{theorem}{Theorem}
\newtheorem{lemma}{Lemma}
\newcommand{\pare}[1]{\left(#1\right)}
\newcommand{\pares}[1]{(#1)}
\newcommand{\Tr}{\operatorname{Tr}}
\newcommand{\mat}[1]{\tilde{#1}}
\begin{document}

\title{Polynomial Scaling of Numerical Diagonalization of the 1D Transverse Field Ising Model into a Commuting Basis using the Pauli Product Representation}
\date{\today}
\author{Benjamin Commeau}
\affiliation{University of Connecticut, USA}
\email{benjamincommeau2@gmail.com}
\homepage{https://github.com/benjamincommeau2}

\maketitle

\begin{abstract}
We report numerical results on the  diagonalization of 1D transverse field Ising model. Numerical simulations using the Pauli product representation yield diagonalization from 3 spins to 22 spins in the transverse field Ising model with the number of global Jacobi unitary transformations and number of final terms in diagonalized spin z representation both grew polynomial with the number of spins. These results computed on a classical computer show promise in constructing a quantum circuit to simulate diagonalized generic many-particle Hamiltonians using polynomial number of gates. 
\end{abstract}

\tableofcontents
\section{Introduction}
Simulating time evolution of quantum many-particle systems is exponentially difficult using existing classical algorithms. Quantum algorithms show promise in providing the equivalent simulations in polynomial time. Improvements in the Suzuki-Trotter approximation for time evolution on a quantum computer are recent developments \cite{somma2016trotter}. It has been shown that Hamltonians on a quantum computer cannot be time evolved faster than linear time \cite{berry2007efficient}. This provides a theoretical road block for diagonalizing Hamiltonians.

If there are $n$ unitary transformations $U_i$, such that a Hamiltonian $H$  is transformed to be diagonal
\begin{align}
U^\dagger_{n-1} \cdots U^\dagger_0 H U_0 \cdots  U_{n-1} 
=
\sum_{k} D_k + \tilde{\epsilon}
\ ,
\end{align}
where $[D_k,D_j]=0$ and $\tilde{\epsilon}$ is a residue error, then the time evolution of a state $\ket{\psi}$ becomes
\begin{align}
e^{-iHt}\ket{\psi} \approx e^{-i\epsilon t}\prod_{k} e^{-iD_kt}\ket{\psi}
\ , 
\end{align}
where fast forwarding in time becomes faster than linear time, thus suggesting a violation of the no fast forwarding theorem \cite{berry2007efficient}. However, a recent article \cite{atia2017fast} provides a hypothesis that the Hamiltonian could be transformed into diagonal form in polynomial time under certain conditions.

We propose a new classical algorithm which shows numerical results of polynomial scaling of the number of global Jacobi unitary transformations $U_i$ with respect to the number of spins in a 1D transverse field Ising model. This classical algorithm uses a specific version of the matrix product density operators \cite{verstraete2004matrix} called the Pauli products representation, which performs numerical diagonalization by outputting the equivalent spin-z model with the same eigenvalue spectral function. All $U_i$ are simple Jacobi Pauli products which suggests easier translation into quantum circuit unitary gate design. We motivate this research specific to the 1D transverse field Ising model, because it is a well studied model with a known analytic solution. See \cite{verstraete2009quantum} for an example of its use.

Though this algorithm is flexible to start diagonalizing any k-body interacting many-particle Hamiltonian it is not a stand-alone algorithm that extracts meaningful eigenvalue information and must be paired with other methods that can extract meaningful eigenvalue information from the final diagonalized form in the spin-z representation in polynomial time. Instead we show diagonlizing specific examples can be done in polynomial time on a classical computer. At the time of submission, there does not exist a theorem that proves or disproves the existence of classical algorithms that can diagonalize k-body interacting many-particle Hamiltonian into the spin-z representation in polynomial time.

This paper is outlined as follows: First section discusses preliminaries for the Pauli product representation. Second section shows numerical results calibrating the proposed algorithm written in the Pauli products representation and its use to diagonalize the 1D transverse field Ising model for various number of spins. Third section outlines the design of the algorithm written in the Pauli products representation. Fourth section discusses open questions regarding the Pauli products representation for classical algorithms. Fifth section concludes our findings. Final section provides relevant supplementary material in the form of theorems, lemmas, corollaries, and definitions.

\section{Preliminaries}\label{sec:preliminaries}
The identity matrix and the Pauli matrices
\begin{align}
I,Z,X,Y=
\begin{pmatrix}
1 & 0 \\
0 & 1
\end{pmatrix}
,
\begin{pmatrix}
1 & 0 \\
0 & -1 
\end{pmatrix}
,
\begin{pmatrix}
0 & 1 \\
1 & 0 
\end{pmatrix}
,
\begin{pmatrix}
0 & -i \\
i & 0 
\end{pmatrix}
\end{align}
span a space of $2\times 2$ matrices that are fundamental in the construction of Pauli product representation. Throughout this paper we refer to Pauli product representation as the gamma representation.

A gamma matrix
\begin{align}
\Gamma^{p,q}
=
[I \otimes X  \otimes Y  \otimes Z  \otimes \cdots]^{p,q}
\end{align}
contains $n$ $2\times 2$ matrices, Pauli matrices $X,Y,Z$ and identity $I$, in sequential Kronecker product, spanning a matrix space $2^n\times 2^n$, where each Pauli matrix and identity matrix is uniquely specified by the binary bits of $p,q=0,\cdots ,2^n-1$. See definition \ref{gamma} for more information. The integers $p,q$ have meaning about the properties of its gamma matrix. The bits of $p$ determine which groups of matrices to pick from: $(I,Z)$ or $(X,Y)$. Notice that each group contains matrices whose non-zero elements are either in the diagonal or off-diagonal of a $2\times 2$ matrix. The bits of $q$ determine whether an alternating sign is included in the chosen matrices. Notice that $(I,X)$ have non-zero matrix elements with no alternating sign, and $(Z, Y)$ have alternating sign.

The chosen name for $\Gamma$ as gamma matrix is inspired by the gamma matrices of Dirac's equation.
For example, using definition \ref{gamma} the gamma matrices from Dirac's equation can be written as $\gamma_0,\gamma_1,\gamma_2,\gamma_3=\Gamma^{0,2},i\Gamma^{2,2},i\Gamma^{3,3},i\Gamma^{2,3}$.

Any Hamiltonian $H$ in a finite dimensional Hilbert space can be expressed as a weighted sum of the gamma-matrices,
\begin{align}
H=\sum_{p,q}h_{p,q} \Gamma^{p,q} \ .
\end{align}
See theorem \ref{gammamatricesdecomposition}. For example
\begin{align}
\begin{pmatrix}
3 & 0 & 7 & 0
\\
0 & 3 & 0 & 1
\\
7 & 0 & 1 & 0
\\
0 & 1 & 0 & 1
\end{pmatrix}
=
(2)\Gamma^{0,0}
+
(1)\Gamma^{0,2}
+
(4)\Gamma^{2,0}
+
(3)\Gamma^{2,1}
\ .
\end{align}

All of the diagonal elements of a matrix are contained in the weights of the $\Gamma^{p,q}$ when $p=0$. See theorem \ref{diagonalelements}. If a matrix is diagonalized in the gamma representation, then all gamma-elements are zero except the diagonal elements, and a one-dimensional Walsh-Hadamard transform is needed to convert the gamma diagonal elements into eigenvalues.

The gamma-matrices form an algebraic group under matrix multiplication. Theorem \ref{mult} shows that two gamma matrices in matrix product $\Gamma^{p,q}$,$\Gamma^{r,s}$ are equal to one gamma-matrix in the same algebraic group $\Gamma^{p\veebar r,q\veebar s}$ times a structure constant $f_{p,q}^{r,s}$ which only has four possible values $1,-1,i,-i$, where $p\veebar r$ is the bitwise exclusive-OR of $p$ and $q$. See definitions \ref{binaryvectorsandfields} and \ref{vectorbinaryfield}.

The gamma matrices are self-similar under tensor products of each other. For example the following tensor product
\begin{align}
\Gamma^{p_1,q_1}
\otimes
\Gamma^{p_2,q_2}
=
\Gamma^{P,Q}
\end{align}
where $P=p_1 \oplus p_2$ and $Q=q_1 \oplus q_2$, and $\oplus$ is the bitwise concatenation operator. This property is useful in representing a large Hilbert space as a concatenation of smaller Hilbert spaces. See theorem \ref{kroneckerproduct} for more information.

The transverse field Ising model \cite{verstraete2009quantum}
\begin{align}
H=\sum_{i=0}^{n-1} \pare{X_iX_{i+1} + 2 Z_i} + Y_0Z_1\cdots Z_{n-2}Y_{n-1}
\label{Hxxz}
\end{align}
has the single spin-z diagonal form of
\begin{align}
D=\sum_{k} \omega_k Z_k 
\label{Dkz}
\end{align}
as shown in \cite{verstraete2009quantum}. It should be noted that the diagonal form in the spin-z representation with the fixed spectral function of its eigenvalues is not guaranteed to be unique. See theorem \ref{uniquenessofdiagonalform}. This means equation \ref{Dkz} is not the only spectral function for equation \ref{Hxxz}, and it is possible to have many interacting spin-z terms in the diagonal form with the same eigenvalues.

Due to the large number of diagonal terms discovered during our simulations, it was not practical to compute all of their eigenvalues to test if the diagonalization was correct. Instead we provide a small example of a Hermitian matrix of 6 random gamma elements and provide numerical evidence of its convergence to its true eigenvalues in the next section.

\section{Results\label{results}}
To test our new classical algorithm using the Pauli products representation we computed the eigenvalues for the sparse gamma matrix listed in table \ref{tblfregnvlcnvrgnc} for calibration.
\begin{table}[ht!]
\begin{tabular}{|c|c|c|}
\hline
$h_{p,q}$ & $p$ & $q$\\
\hline
-0.500231 & 00000111 & 10000011  \\
\hline
0.957786  & 00111010 & 00111100  \\
\hline
-0.245173 & 10000110 & 11100010  \\
\hline
0.345722  & 10111101 & 00110001  \\
\hline
0.172746  & 11000110 & 01110011  \\
\hline
-0.960913 & 11001110 & 10001111 \\
\hline
\end{tabular}
\caption{The Hermitian matrix used in Fig. \ref{egnvlcnvrgnc}.}
\label{tblfregnvlcnvrgnc}
\end{table}
Figure \ref{egnvlcnvrgnc} illustrates the diagonalization history for table \ref{tblfregnvlcnvrgnc}. The red line shows the relative error convergence of the experimental eigenvalues to the correct eigenvalues using theorem \ref{relativedistanceeigenvaluemeasure}. The green line shows the increase in sparse memory during each diagonalization step. The simulation achieved a relative error, using theorem \ref{relativedistanceeigenvaluemeasure}, of $10^{-6}$ with 16 sparse diagonal elements.
\begin{figure*}[ht!]
{\includegraphics[width=0.999\textwidth]{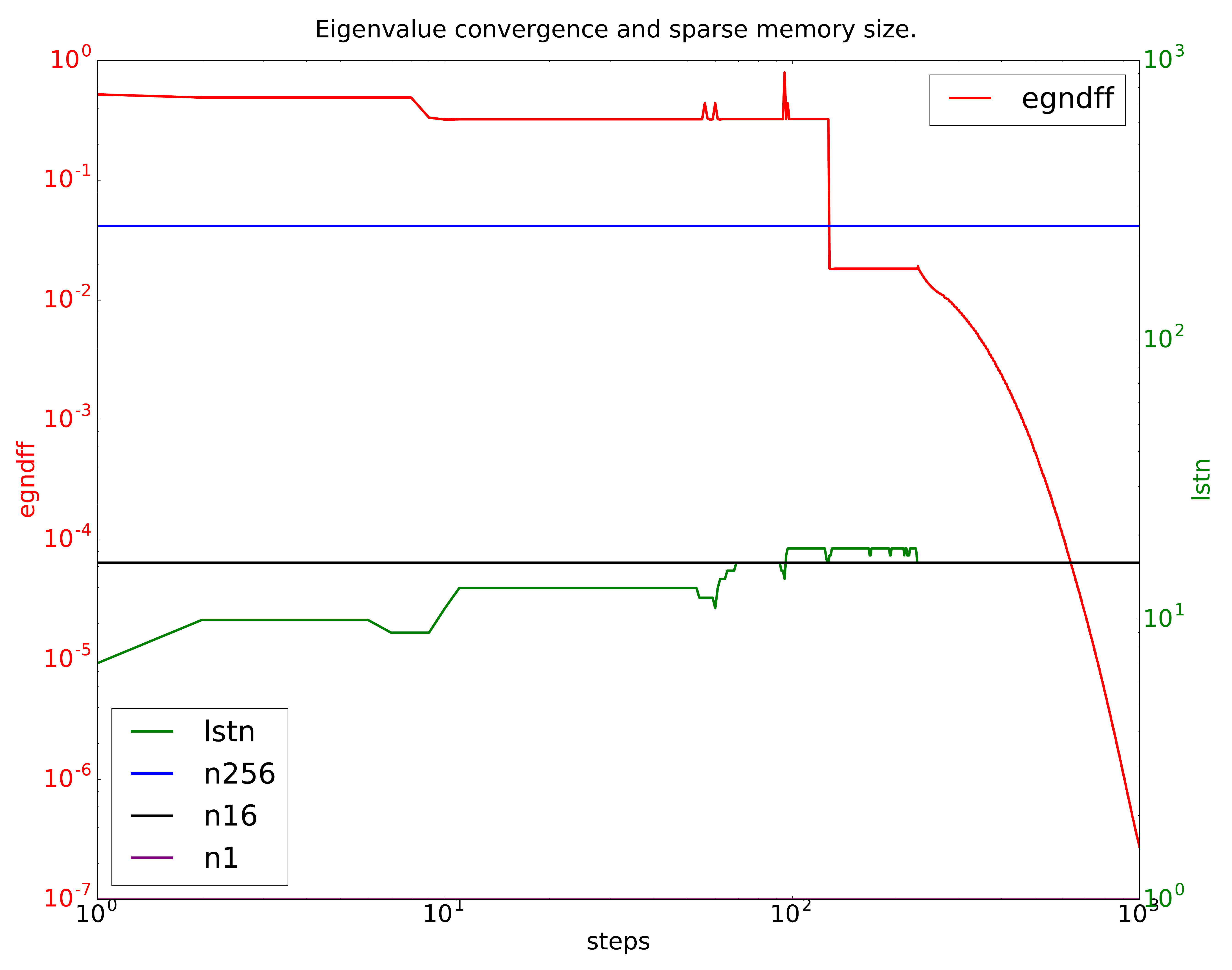}\caption{Hermitian matrix with 6 randomly weighted gamma matrices diagonalized using $\mathcal{S}$ and compared to its true eigenvalues using Python Scipy module. "egndff" in red is the relative difference between the calculated and true eigenvalues. "lstn" are the total number of gamma elements stored in memory at any given moment.}\label{egnvlcnvrgnc}}
\end{figure*}

Note: We used the Python Scipy Coordinate Format (COO) \cite{scipy} to solve the eigenvalues of the sparse matrix in table \ref{tblfregnvlcnvrgnc} using its traditional matrix form. COO does not permit a solution of all the eigenvalues of a sparse matrix, but rather provides 2 less then the total number of eigenvalues, and does not guarantee that the last two eigenvalues are always the smallest in magnitude. This created a problem in comparing the eigenvalues between the two methods. We used a Hungarian algorithm script \cite{macong} to find the best pairwise minimum difference between the two list of eigenvalues, (254) vs. (256), to construct figure \ref{egnvlcnvrgnc}.

\begin{figure*}[ht!]
{\includegraphics[width=0.999\textwidth]{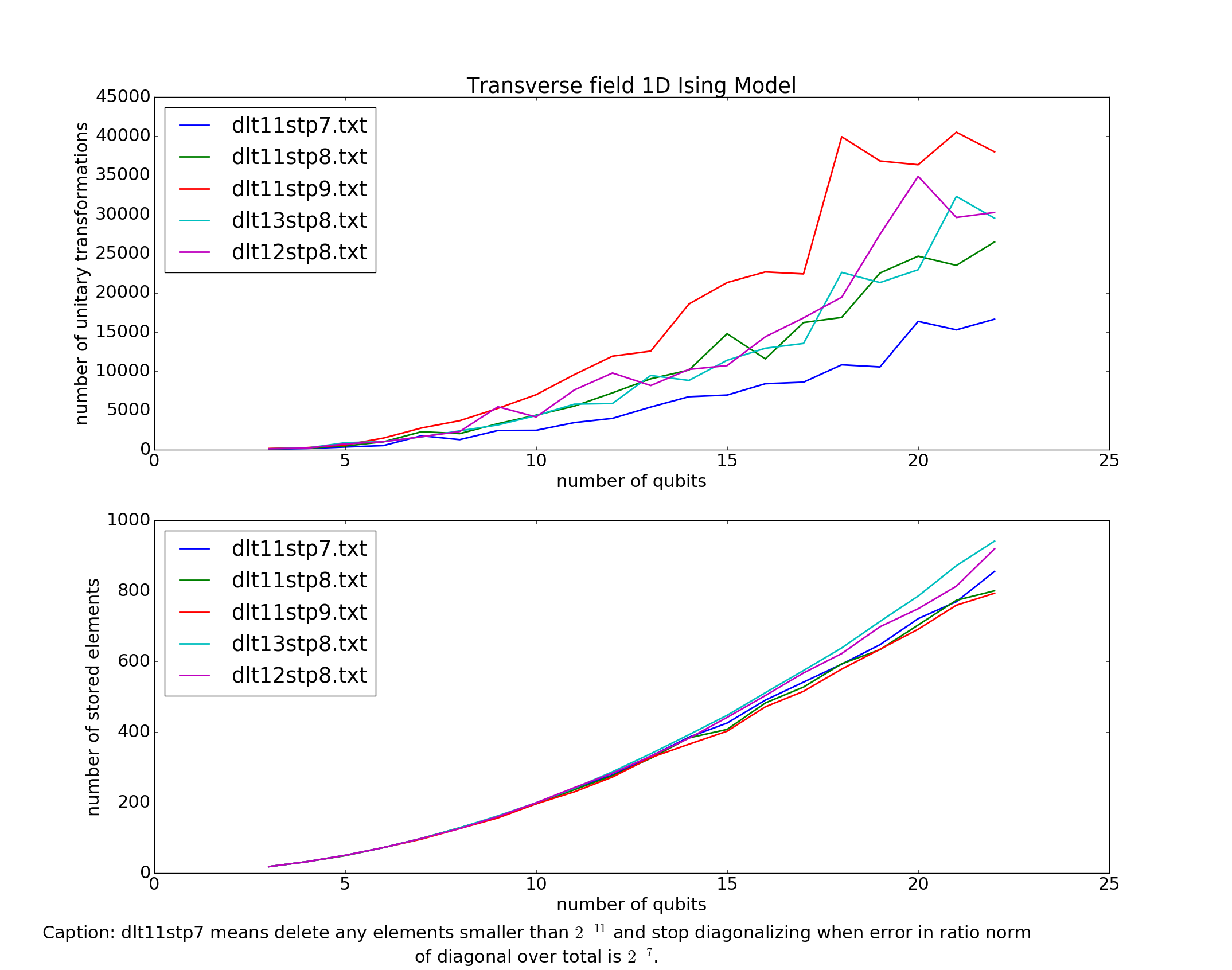}
\caption{Diagonalization for the 1D Transverse Field Ising Model using the Pauli Product representation. dlt11stp7 means delete any elements smaller than $2^{-11}$ and stop diagonalizing when error in ratio norm of diagonal over total is $2^{-7}$.}
\label{transversefieldisingmodelmultidiag}}
\end{figure*}
Figure \ref{transversefieldisingmodelmultidiag} illustrates diagonalization scaling performance of the 1D transverse field Ising model for the number of spins ranging from 3 to 22. The largest computation time on a single CPU task and one of the plotted curves was around 4 hours. There are two error threshold parameters in the simulation: Any gamma element that falls below $|h_{p,q}|\leq \chi$ in magnitude is deleted (dlt) after each diagonalization step. The simulation stops (stp) when $\epsilon$ error is achieved, defined as
\begin{align}
\epsilon
=
1-\dfrac{\sum_{q}(h_{0,q})^2}{\sum_{p,q}(h_{p,q})^2}
\ ,
\end{align}
where $0\leq \epsilon \leq 1$. Note: From the preliminary section $\tilde{\epsilon}\neq \epsilon$ and requires solving a non-trivial equation to translate the two. The following error thresholds were simulated: $(\epsilon,\chi)=$ $(2^{-11}$,$2^{-7})$, $(2^{-11},2^{-8})$, $(2^{-11},2^{-9})$, $(2^{-13},2^{-8})$, $(2^{-12},2^{-8})$. The vertical axis of the top subplot in figure \ref{transversefieldisingmodelmultidiag} labels the total number of unitary transformations to diagonalize the Hamiltonian within the error thresholds, and the vertical axis of the bottom subplot labels the number of gamma elements stored in memory. Both subplots show polynomial scaling with respect to the number of spins in the Hamiltonian.
\section{Algorithm Design}\label{algorithmdesign}
Lemma \ref{unitarytransformationsingleparameter} and \ref{normpreservingrotation} provide the framework on how to diagonalize in the Pauli products representation. Theorem \ref{diagonalizationschemecliffordalgebrarepresentationlocalsingleunitaryparameters} proves that any finite dimensional Hamiltonian can be diagonalized up to some error threshold in the Pauli products representation. The unitary transformation is defined as $U^{r,s}(\phi)=\cos(\phi)I+i \sin(\phi)\Gamma^{r,s}$ and it is a global Jacobi unitary transformation acting on
\begin{align}
H' =\sum_{p,q}h_{p,q}'\Gamma^{p,q} = (U^{r,s}(\phi))^\dagger H U^{r,s}(\phi)
\ .
\end{align}
Global means matrix multiplication between two gamma matrices using traditional indices use all indices and not a subspace of those indices. 

The algorithm needs to find the best $\Gamma^{r,s}$ and $\phi$ to maximize the diagonal norm 
\begin{align}
\text{maximize}\Big( \sum_{s}(h_{0,q_0}')^2 \Big)
\end{align}
at each step. The algorithm steps to achieve diagonalilzation in the Pauli product representation are listed as follows:

The first step is to find the largest vector norm of
\begin{align}
\text{find\_maximum}\Big( \sum_{s}(h_{r,q_r})^2 \Big)
\end{align}
by varying $r$. An example of a sequence of gamma elements with fixed $r$ is
\begin{align}
\Gamma^{r,s} = \binom{I}{Z}_{s_{n-1}}^0 \otimes \binom{X}{Y}_{s_{n-2}}^1 \otimes \binom{I}{Z}_{s_{n-3}}^0 \otimes  \cdots
\end{align}
where $r,s$ have $n$ bits, $r=010\cdots$, and $s_{n-1},s_{n-2},s_{n-3},$ are the bits of $s$ defining which of the two Pauli matrices to pick in the specified sequential Kronecker product. Once the largest vector norm is found, then the best choice of $r$ is found.

The second step is to find the largest list of sparse elements for fixed $r$ that either commute (0) or anti-commute (1) with sparse diagonal elements. This is determined by evaluating the equation $q_0\cdot r \mod 2 =$ $(0)$ or $(1)$, where $q_0\cdot r$ is the bitwise inner product, over the sparse diagonal elements indexed by $q_0$.  

If the anti-commuting case is the largest population, then the best $s$ is determined by evaluating the matrix product of all of the diagonal elements with the off-diagonal elements for fixed $r$ and finding the largest common factor of the matrix products. This is evaluated as 
\begin{align}
\text{most\_common}(q_0\veebar q_r)=s
\ ,
\end{align}
where $q_0$ are the indices spanning the diagonal elements, $q_r$ are the indices spanning the off-diagonal elements for fixed $r$, and $\veebar$ is the bitwise exclusive-OR.

If the commuting case is the largest population, then the best $s$ is determined by toggling the bits of $s$ until $s\cdot r\mod 2 = 1$ is satisfied.
\section{Discussion}\label{discussion}
We hypothesize without proof that their are two reasons why this new classical algorithm may bypass exponential complexity theory: First it includes error thresholds to delete small terms less than $\chi$ during each diagonalization step and final diagonalization stops when total error convergense $\epsilon$ has been achieved. Second the final diagonalized form is a spin-z model which is exponentially hard to extract all of its eigenvalues. 

Regarding specific types of Hamiltonians that cannot have polynomial scaling complexity using this method are the Hamiltonians that have unbounded simultaneous particle interactions, because translating those Hamiltonians into the Pauli products representation will have an exponential number of sparse elements to be stored in memory. As of now there are no theorems that prove or disprove the exponential scaling complexity of this algorithm for any non-trivial classes of Hamiltonians. For example can this new method diagonalize all Hamiltonians that have truncated number of simultaneous particle interactions? If not, can we reduce the class of Hamiltonians further until we find ones that do scale polynomially?.

Regarding designing quantum circuits, it is not obvious how to translate this Pauli products representation into a sequence of unitary gates that fully diagonalize a Hamiltonian. It has the advantage that each unitary transformation operation is all simple Pauli products, but translating the Boolean algebraic rules outlined in the section on algorithm design into equivalent quantum circuit logic is an open question.
\section{Conclusion}\label{conclusion}
We have provided a new method using the Pauli products to construct a classical algorithm that diagonalizes a 1D transverse field Ising model for various number of spins. The data suggests polynomial scaling complexity. The diagonalization is in the representation of a commuting basis, but extracting the eigenvalues is NP-hard. This scaling complexity shows promise in finding a quantum circuit that fully diagnoalizes a Hamiltonian in polynomial scaling complexity. In order for this research to be useful for others, it needs to be paired with other methods that can extract meaningful results from the final diagonal form in the spin-z representation in polynomial time.

\section{Acknowledgement}\label{acknowledgement}
I am grateful to Alexander Balatsky, Patrick Coles, Benjamin Villalonga Correa, Lukasz Cincio, Gayanath Fernando, Matthias Geilhufe, Zhoushen Huang, Sidhant Misra, Sanguthevar Rajasekaran, Andrew Tyler Sornborger, Yigit Subasi, and Marc Denis Vuffray for their useful discussions. The High Performance Computing (HPC) cluster using the Badger account at Los Alamos National Laboratory was used to run all simulations. The funding of this research was made possible by the U.S Department of Energy Office of Science Graduate Student Research (SCGSR) program. Special thanks to Avadh Saxena for his review and critique of this paper.




\appendix
\section{Supplementaries}\label{supplementaries}

\subsection{Binary Vectors and Fields}
\begin{definition}
\textbf{Galois Field 2 (Binary Field).} Let $\textit{GF}(2)$ be a binary field satisfying the following axioms: Let $p,q,r\in \textit{GF}(2)$, then $p,q,r\in (0,1)$ are bits. Let $p\veebar q$ be XOR of $p,q$, and  $p\land q$ be AND of $p,q$, then $\veebar,\land : p\times q \rightarrow \textit{GF}(2)$.  
\begin{itemize}
\item \textbf{Addition (logical-exclusive-or, XOR).} 

\begin{tabular}{|l|l|l|}
\hline
$p$ & $q$ & $(p \veebar q)$ \\
\hline
$0$ & $0$ & $0$  \\
$0$ & $1$ & $1$  \\
$1$ & $0$ & $1$  \\
$1$ & $1$ & $0$  \\
\hline
\end{tabular}
\item \textbf{Multiplication (logical-and, AND)}

\begin{tabular}{|l|l|l|}
\hline
$p$ & $q$ & $(p \land q)$ \\
\hline
$0$ & $0$ & $0$  \\
$0$ & $1$ & $0$  \\
$1$ & $0$ & $0$  \\
$1$ & $1$ & $1$  \\
\hline
\end{tabular}
\item \textbf{Associativity.} For addition $p\veebar (q\veebar r)=(p\veebar q)\veebar r$, and multiplication $p\land (q\land r)=(p\land q)\land r$.
\item \textbf{Commutativity.} For addition $(q\veebar r)=(r\veebar q)$, and multiplication $(q\land r)=(r\land q)$.
\item \textbf{Identity.} For addition $p\veebar 0 = p$ and multiplication $p \land 1 = p$.
\item \textbf{Inverse.} For addition $p\veebar p = 0$ and multiplication $p \land p = 1$, only for $p\neq 0$.
\item \textbf{Distributivity.} $p\land(q\veebar r)=(p\land q)\veebar (p\land r)$.
\item \textbf{Addition with Carry.} Let $+,*$ be addition and multiplication for two integers respectively, then $p+q =p \veebar q + 2*(p\land q).$
\end{itemize}
\label{binaryvectorsandfields}
\end{definition}
\begin{definition}
\textbf{Vector Binary Field.} The vector binary field inherits all properties of $GF(2)$, except addition with carry, and gains additional properties as follows: Let $m$ be a positive integer, $\mathcal{B}(m)\in (0,\cdots ,2^m-1)$ be the vector binary field of bit length $m$, $p,q\in \mathcal{B}(m)$, and $p_j,q_j \in \textit{GF}(2)$, then
\begin{align}
p = \sum_{j=0}^{m-1} 2^j p_j
\ , 
\end{align}
$p_j=\text{floor}(p/2^j) \mod 2$, and $\veebar,\land : p\times q \rightarrow \mathcal{B}(m)$.
\begin{itemize}
\item \textbf{Addition.}
\begin{align}
p\veebar q = \sum_{j=0}^{m-1} 2^j(p_j\veebar q_j)
\ .
\end{align}
\item \textbf{Multiplication.}
\begin{align}
p\land q = \sum_{j=0}^{m-1} 2^j(p_j\land q_j)
\ .
\end{align}
\item \textbf{Metric.}
\begin{align}
p\cdot q = \sum_{j=0}^{m-1} (p_j\land q_j)
\ .
\end{align}
\item \textbf{Metric Addition with Carry.} Taken from lemma \ref{xorsum}, let $+,*$ be addition and multiplication of integers respectively, then
\begin{align}
s\cdot p + s\cdot q = s\cdot (p\veebar 	q) + 2*(s\cdot (p\land q))
\ .
\end{align}
\textbf{Caution.} $s\cdot (p \veebar q)\neq s\cdot (p + q)$ is ill-defined, because $+$ implies addition with carry and $p+q$ can have more bits then $s$. 

\textbf{Caution.} $2s\cdot p$ and $(2s)\cdot p$ are ill-defined, because $(2s)$ can have more bits then $p$. However, $2(s\cdot p)=2*(s\cdot p)$ is defined.

\textbf{Note.} Throughout this paper we extend the $(\cdot)$ operation to the case $s\cdot (p \land q) = s\cdot p \cdot q$ for shorthand notation.
\item \textbf{Kronecker Delta.}
\begin{align}
\delta(p)=
\begin{cases}
1, \ p=0\\
0, \ p \neq 0 \\
\end{cases}
\ .
\end{align}
\end{itemize}
\label{vectorbinaryfield}
\end{definition}
\begin{lemma}
\textbf{Bitwise-XOR-Sum.} Let $s,p,q= 0,\cdots ,2^{m-1}-1$ be integers, and $s_i,p_i,q_i=0,1$ be their bits, then $s\cdot (p\veebar q) = s\cdot p +s\cdot q -2  \left( s\cdot (p \land q)\right)$.	
\label{xorsum}
\end{lemma}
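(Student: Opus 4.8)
The plan is to reduce the claimed identity to a single bitwise statement and then sum over bit positions. First I would recall from Definition~\ref{vectorbinaryfield} that the metric on $\mathcal{B}(m)$ is defined bitwise, $s\cdot p=\sum_{j}(s_j\land p_j)$, where the outer sum is ordinary integer addition, and that the extended notation introduced there gives $s\cdot(p\land q)=\sum_{j}(s_j\land p_j\land q_j)$. Since all four quantities appearing in the lemma are such bitwise sums, it suffices to prove, for each fixed bit index $j$, the integer identity
\[
s_j\land(p_j\veebar q_j)=(s_j\land p_j)+(s_j\land q_j)-2\,(s_j\land p_j\land q_j),
\]
and then add these equations over $j$.

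For the per-bit identity I would invoke the \textbf{Addition with Carry} axiom of $\textit{GF}(2)$ from Definition~\ref{binaryvectorsandfields}, namely $p_j+q_j=(p_j\veebar q_j)+2\,(p_j\land q_j)$; rearranging gives $p_j\veebar q_j=p_j+q_j-2\,(p_j\land q_j)$ as integers. Because $s_j\in\{0,1\}$, logical AND with $s_j$ coincides with ordinary integer multiplication by $s_j$, and multiplication by an integer distributes over the integer sum on the right, so
\[
s_j\land(p_j\veebar q_j)=s_j\,(p_j\veebar q_j)=s_jp_j+s_jq_j-2\,s_jp_jq_j,
\]
which is exactly the desired per-bit identity once each product of bits is rewritten as an AND. (Equivalently one may simply tabulate the eight cases $(s_j,p_j,q_j)\in\{0,1\}^3$.)

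Summing over the bit positions and using linearity of integer addition yields
\[
\sum_{j}s_j\land(p_j\veebar q_j)=\sum_{j}(s_j\land p_j)+\sum_{j}(s_j\land q_j)-2\sum_{j}(s_j\land p_j\land q_j),
\]
and identifying the four sums through the definitions of $\cdot$ and the extended $\cdot$ notation gives $s\cdot(p\veebar q)=s\cdot p+s\cdot q-2\,(s\cdot(p\land q))$, as claimed. I do not expect a genuine obstacle: the only point requiring care is bookkeeping, namely that the metric's outer sum is integer-valued and is \emph{not} reduced modulo $2$, so the carry term $-2\,(s\cdot(p\land q))$ genuinely contributes; once that is kept in mind the statement follows immediately from the elementary fact that XOR of two bits equals their integer sum minus twice their AND.
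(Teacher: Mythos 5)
Your proof is correct and follows essentially the same route as the paper's: expand the metric bitwise, substitute the per-bit identity $p_i\veebar q_i=p_i+q_i-2p_iq_i$, multiply by $s_i$, and re-sum. The only cosmetic difference is that you explicitly trace the per-bit identity back to the \textbf{Addition with Carry} axiom of Definition~\ref{binaryvectorsandfields}, while the paper simply asserts the substitution; this is a small gain in rigor but not a different argument.
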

\begin{proof}
By bitwise decomposing 
\begin{align}
s\cdot (p\veebar q)
&=&
\sum_{i}^{m} s_i (p_i \veebar q_i)
\ .
\end{align}
By substituting $p_i \veebar q_i = p_i + q_i - 2p_iq_i$,
\begin{align}
s\cdot (p\veebar q)
&=&
\sum_{i}^{m} s_i (p_i + q_i - 2p_iq_i)
\\&=&
s\cdot p + s\cdot q - 2 (s \cdot (p \land q))
\ .
\end{align}
\end{proof}
\begin{lemma}
\textbf{Hadamard Transform Specific Case.} Let $s,k,q=0,\cdots 2^{m}-1$ be integers of equal bit length $m$, then $\sum_{k=0}^{2^m-1}
(-1)^{(s\veebar q)\cdot k } = 2^m \delta(s\veebar q
)$.
\label{Spcfccshdmrdtrnsfrm}
\end{lemma}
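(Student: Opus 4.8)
The plan is to reduce to a single variable by setting $t = s \veebar q \in \mathcal{B}(m)$, so that the claim becomes $\sum_{k=0}^{2^m-1} (-1)^{t\cdot k} = 2^m\,\delta(t)$, and then to split into the two cases $t=0$ and $t\neq 0$.

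First I would dispose of $t = 0$. By the metric in Definition \ref{vectorbinaryfield}, $0\cdot k = \sum_{i=0}^{m-1}(0\land k_i) = 0$ for every $k$, so each summand equals $(-1)^0 = 1$ and the sum is exactly $2^m$. Since $\delta(0)=1$, this matches $2^m\delta(t)$.

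Next I would treat $t\neq 0$. Here the key step is to use the bitwise decomposition $t\cdot k = \sum_{i=0}^{m-1} t_i k_i$ (an ordinary integer sum, per Definition \ref{vectorbinaryfield}) together with $(-1)^{a+b}=(-1)^a(-1)^b$, which rewrites $(-1)^{t\cdot k}$ as $\prod_{i=0}^{m-1}(-1)^{t_i k_i}$. Letting $k$ range over $0,\dots,2^m-1$ is the same as letting the bit tuple $(k_0,\dots,k_{m-1})$ range over $\{0,1\}^m$, so the sum factorizes:
\begin{align}
\sum_{k=0}^{2^m-1}(-1)^{t\cdot k}
=
\prod_{i=0}^{m-1}\Bigl(\sum_{k_i=0}^{1}(-1)^{t_i k_i}\Bigr)
\ .
\end{align}
Each factor with $t_i=0$ equals $2$, while each factor with $t_i=1$ equals $1+(-1)=0$. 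Because $t\neq 0$, at least one bit $t_i$ is nonzero, hence at least one factor vanishes and the product is $0$, agreeing with $2^m\delta(t)=0$.

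I do not anticipate a genuine obstacle; the only points deserving care are the two routine justifications underlying the factorization: that the sum over $k\in\{0,\dots,2^m-1\}$ is a bijective re-indexing over bit tuples, and that $t\cdot k$ here denotes the plain integer sum $\sum_i t_i k_i$ rather than a value reduced mod $2$ — both read off directly from Definition \ref{vectorbinaryfield}. An alternative would be a geometric-series argument pairing $k$ with $k\veebar 2^{i}$ for a fixed set bit $i$ of $t$, but the bit-factorization above is the cleanest route.
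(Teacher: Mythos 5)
Your proof is correct and follows essentially the same route as the paper's: both expand $k$ into its bits, factorize the sum into a product of one-bit sums $\prod_{i}\bigl(1+(-1)^{t_i}\bigr)$, and observe that any set bit of $t = s\veebar q$ annihilates the product while $t=0$ gives $2^m$. Your preliminary substitution $t=s\veebar q$ and the explicit two-case split are minor stylistic differences, not a different argument.
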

\begin{proof}
By expanding integers into bits,
\begin{align}
\sum_{k=0}^{2^m-1}
(-1)^{(s\veebar q)\cdot k }
= \nonumber \\
\left(\sum_{k_0=0}^{1}\sum_{k_1=0}^{1}\cdots \sum_{k_{m-2}=0}^{1} \sum_{k_{m-1}=0}^{1}\right)
(-1)^{\sum_{r=0}^{m-1}(s_r\veebar q_r) k_r }
\ .
\end{align}
By expanding sum of exponents into products and evaluating the sum over $k_r=0,1$,
\begin{align}
\sum_{k=0}^{2^m-1}
(-1)^{(s\veebar q)\cdot k }
&=& 
\prod_{r=0}^{m-1} \sum_{k_{r}=0}^{1}
(-1)^{(s_r\veebar q_r) k_r }
\\ &=&
\prod_{r=0}^{m-1}
\left( 1+ (-1)^{(s_r\veebar q_r) } \right)
\ .
\end{align}
If any of the bits of $s_r\veebar q_r=1$, then the whole sum is zero. The only case when the sum is non-zero is when all $s_r\veebar q_r=0$, thus
\begin{align}
\sum_{k=0}^{2^m-1}
(-1)^{(s\veebar q)\cdot k }
&=&
2^m \prod_{r=0}^{m-1} \delta(s_r\veebar q_r)
\\&=&
2^m \delta(s\veebar q)
\ .
\end{align}
\end{proof}
\begin{lemma}
\textbf{Bitwise CNOT Gate.} Let $p,q=0,\cdots ,2^m-1$, then the map $(p,q)\longleftrightarrow (p,p\veebar q)$ is bijective.
\label{cnot}
\end{lemma}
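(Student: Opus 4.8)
The plan is to exhibit an explicit two-sided inverse for the map $f:(p,q)\mapsto(p,p\veebar q)$, from which bijectivity is immediate. First I would observe that applying the same rule twice returns the original pair: using the associativity of $\veebar$ on $\mathcal{B}(m)$ together with the self-inverse property $p\veebar p=0$ and the additive identity $0\veebar q=q$ from Definition \ref{binaryvectorsandfields}, inherited by the vector binary field of Definition \ref{vectorbinaryfield}, we have $p\veebar(p\veebar q)=(p\veebar p)\veebar q=q$. Hence $f(f(p,q))=f(p,p\veebar q)=(p,\,p\veebar(p\veebar q))=(p,q)$, so $f$ is an involution on the finite set $\{0,\dots,2^m-1\}\times\{0,\dots,2^m-1\}$.

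Second, since $f$ maps this finite set to itself and satisfies $f\circ f=\mathrm{id}$, it is its own inverse and therefore a bijection. If a more elementary presentation is wanted, one can instead check injectivity and surjectivity directly, each reducing to the same cancellation law: $f(p,q)=f(p',q')$ forces $p=p'$ from the first coordinate and $p\veebar q=p\veebar q'$ from the second, and XOR-ing both sides of the latter by $p$ gives $q=q'$; and any target $(p,r)$ is hit by $(p,p\veebar r)$. If desired, the identity $p\veebar(p\veebar q)=q$ can be verified bitwise via $(p\veebar q)_j=p_j\veebar q_j$ (the Addition clause of Definition \ref{vectorbinaryfield}), reducing to the $GF(2)$ truth table, but this refinement is not needed.

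There is no genuine obstacle here: the entire content is the cancellation identity $p\veebar(p\veebar q)=q$, i.e.\ the group structure of $(\mathcal{B}(m),\veebar)$. The one point requiring care is to stay strictly within the XOR operation and never invoke integer addition with carry, exactly as flagged by the cautions following Definition \ref{vectorbinaryfield}. Finally, the name ``CNOT'' is justified because, interpreting $p$ as a control register and $q$ as a target register, $f$ acts bitwise as $m$ parallel controlled-NOT operations, flipping $q_j$ precisely when $p_j=1$.
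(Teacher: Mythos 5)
Your proof is correct, but it takes a genuinely different route from the paper's. You argue abstractly: exhibit $f:(p,q)\mapsto(p,p\veebar q)$ as an involution by the cancellation identity $p\veebar(p\veebar q)=q$, then invoke the fact that an involution on a finite set is its own two-sided inverse, hence a bijection. The paper instead works concretely at the single-bit level: it writes out the truth table for $(p_k,q_k)\mapsto(p_k,p_k\veebar q_k)$, reads it off in decimal as the permutation $(0,1,2,3)\mapsto(0,1,3,2)$ to establish bitwise bijectivity, and then appeals to the independence of the bit positions to lift this to the full $m$-bit map. Your argument is shorter and leverages the group structure of $(\mathcal{B}(m),\veebar)$ that Definition \ref{vectorbinaryfield} already supplies, so it avoids both the explicit enumeration and the somewhat informal ``linear independence'' step the paper uses to glue the bitwise result together. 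The paper's truth-table presentation, on the other hand, makes the CNOT-gate interpretation visually explicit, which is the pedagogical point of naming the lemma that way — your closing remark recovers the same intuition in words. Both proofs are sound; yours is the cleaner abstract argument, the paper's the more concrete and self-contained one.
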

\begin{proof}
First expand lemma \ref{cnot} into $k$th bits,
\begin{align}
(p_k,q_k)\longleftrightarrow (p_k,p_k\veebar q_k) ,
\end{align}
and construct their two truth tables.

\begin{table}[!h]
\parbox{.49\linewidth}{
\centering
\begin{tabular}{|l|l|c|l|l|}
\hline
$p_k$ & $q_k$ & $\longleftrightarrow$ & $p_k$ & $(p_k \veebar q_k)$ \\
\hline
$0$ & $0$ & & $0$ & $0$ \\
$0$ & $1$ & & $0$ & $1$ \\
$1$ & $0$ & & $1$ & $1$ \\
$1$ & $1$ & & $1$ & $0$ \\
\hline
\end{tabular}
\caption{\label{cnotbin}CNOT Binary}
}
\\
\parbox{.49\linewidth}{
\centering
\begin{tabular}{|l|c|l|}
\hline
$p_k\oplus q_k$ & $\longleftrightarrow$ & $p_k\oplus (p_k \veebar q_k)$ \\
\hline
$0$ & & $0$ \\
$1$ & & $1$ \\
$2$ & & $3$ \\
$3$ & & $2$ \\
\hline
\end{tabular}
\caption{\label{cnotdec}CNOT Decimal}
}
\end{table}

Table \ref{cnotbin} is the binary representation of its truth table and table \ref{cnotdec} is its decimal representation, where $p_k\oplus q_k=2 p_k+q_k$. Table \ref{cnotdec} indicates $(0,1,2,3)\longleftrightarrow(0,1,3,2)$ is bijective, thus $(p_k,q_k)\longleftrightarrow (p_k,p_k\veebar q_k)$ is bijective. Since $(p_k,q_k,p_k\veebar q_k)$ are linearly independent from all $(p_j,q_j,p_j\veebar q_j)$ for $k \neq j$, then $(p,q)\longleftrightarrow (p,p\veebar q)$ is bijective.
\end{proof}

\subsection{$\Gamma$-matrices Decomposition}
\begin{definition}\textbf{$\Gamma$-matrices. }
Let $p,q,i,j$ be positive integers of equal bit length, $p \veebar i$ be bitwise logical-exclusive-OR of $p$ and $i$, $p \cdot i$ be integer count of bits set to one of bitwise AND of $p$ and $q$, $\delta(i)$ be Kronecker delta for bitwise integer operations, where $\delta(i)=1$ when all bits of $i$ are zero, otherwise $\delta(i)=0$, and $\Gamma_{ik}^{pq}$ be a gamma matrix, then $\Gamma_{ij}^{pq} =(-1)^{q\cdot i} (i)^{-p \cdot q} \delta(i\veebar j \veebar p)$.

\textbf{Note.} Throughout this paper we use $i$ in the base of $(i)^{p\cdot q}$ to mean the imaginary number $\sqrt{-1}$ and $i$ in the exponent of $(-1)^{q\cdot i}$ to mean the integer variable $i$.
\label{gamma}
\end{definition}
\begin{theorem}
\textbf{Generating Pauli matrices.} Let $p,q,i,j=0,1$, then using definition \ref{gamma}, $\Gamma^{p,q}$ generates the Pauli matrices $\sigma_x,\sigma_y,\sigma_z$ and $2\times2$ identity I as $\sigma_x=\Gamma^{1,0}$, $\sigma_y=\Gamma^{1,1}$, $\sigma_z=\Gamma^{0,1}$, and $I=\Gamma^{0,0}$.
\label{generatingpaulimatrices}
\end{theorem}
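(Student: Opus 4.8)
The plan is to verify the four claimed identities by direct substitution into definition \ref{gamma}. With single-bit indices the defining formula $\Gamma_{ij}^{pq}=(-1)^{q\cdot i}(i)^{-p\cdot q}\delta(i\veebar j\veebar p)$ collapses enough that one can evaluate the $2\times 2$ matrix entry by entry for each of the four choices $(p,q)\in\{(0,0),(0,1),(1,0),(1,1)\}$ and compare with the matrices written out in Section \ref{sec:preliminaries}.

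First I would record the simplifications valid when $p,q,i,j\in\{0,1\}$: the metric products $q\cdot i$ and $p\cdot q$ reduce to the ordinary integer products $qi$ and $pq$ (each equal to $1$ only when both bits are $1$), and the factor $\delta(i\veebar j\veebar p)$ is nonzero precisely when $j=i\veebar p$. Hence for each fixed $(p,q)$ exactly one column is selected in each row: when $p=0$ the matrix is diagonal ($j=i$), and when $p=1$ it is purely off-diagonal ($j=1\veebar i$).

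Then I would tabulate the four cases. For $(p,q)=(0,0)$ the prefactor is $(-1)^{0}(i)^{0}=1$ and $j=i$, so $\Gamma^{0,0}=\mathrm{diag}(1,1)=I$. For $(p,q)=(0,1)$ we still have $pq=0$, so the prefactor is $(-1)^{i}$ on the diagonal, giving $\Gamma^{0,1}=\mathrm{diag}(1,-1)=\sigma_z$. For $(p,q)=(1,0)$ both prefactors equal $1$ and the nonzero entries lie at positions $(i,1\veebar i)$, giving $\Gamma^{1,0}=\left(\begin{smallmatrix}0&1\\1&0\end{smallmatrix}\right)=\sigma_x$. For $(p,q)=(1,1)$ the nonzero entries are again off-diagonal with value $(-1)^{i}(i)^{-1}=-i(-1)^{i}$, i.e.\ $-i$ at position $(0,1)$ and $+i$ at position $(1,0)$, giving $\Gamma^{1,1}=\left(\begin{smallmatrix}0&-i\\i&0\end{smallmatrix}\right)=\sigma_y$.

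The only step that requires a moment of care is the $(1,1)$ case, where the exponent $-p\cdot q=-1$ is negative and one must read $(i)^{-1}$ as $1/i=-i$ rather than leaving it ambiguous; this is also where the $\sigma_y$ phase convention is pinned down. Beyond that the argument is pure bookkeeping over a four-element index set, so I expect no genuine obstacle.
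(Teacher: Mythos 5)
Your proposal is correct and follows essentially the same approach as the paper: a direct case-by-case evaluation of definition \ref{gamma} over the four choices of $(p,q)$, using $\delta(i\veebar j\veebar p)$ to select the diagonal ($p=0$) or off-diagonal ($p=1$) entries and the prefactor $(-1)^{q\cdot i}(i)^{-p\cdot q}$ to fix the signs and phases. The only difference is presentational: the paper delegates the arithmetic to a short Python script that prints the four matrices, whereas you carry out the same substitutions by hand (including the careful reading of $(i)^{-1}=-i$ in the $(1,1)$ case), which if anything reads more like a conventional mathematical proof.
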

\begin{proof}
The Python script below
\begin{lstlisting}
import numpy as np
def gamma(p,q,i,j):
  res = np.bitwise_xor(p,i)
  res = np.bitwise_xor(res,j)
  if(res==0):
    res=np.bitwise_and(p,q)
    res=res+2*np.bitwise_and(q,i)
    return 1j**(-res)
  else:
    return 0
mat=np.zeros((2,2),dtype=np.complex)
for p in [0,1]:
  for q in [0,1]:
    for i in [0,1]:
      for j in [0,1]:
        mat[i,j]=gamma(p,q,i,j)
    print("p="+str(p)+", q="+str(q))
    print("gamma(p,q)=")
    print(mat)
\end{lstlisting}
will generate all Pauli matrices using definition \ref{gamma} as terminal output of the form
\begin{lstlisting}
p=0, q=0
gamma(p,q)=
[[ 1.+0.j  0.+0.j]
 [ 0.+0.j  1.+0.j]]
p=0, q=1
gamma(p,q)=
[[ 1.+0.j  0.+0.j]
 [ 0.+0.j -1.-0.j]]
p=1, q=0
gamma(p,q)=
[[ 0.+0.j  1.+0.j]
 [ 1.+0.j  0.+0.j]]
p=1, q=1
gamma(p,q)=
[[ 0.+0.j  0.-1.j]
 [-0.+1.j  0.+0.j]]
\end{lstlisting}
where 1.j is the imaginary number and 0.j is imaginary number times zero as used in NumPy.
\end{proof}
\begin{lemma} \textbf{$\Gamma$-matrices Linear Independence.}
Let $p,q,i,j= 0,\cdots,2^m-1$, then there are $2^m\times2^m$ linearly independent matrices $\Gamma_{ij}^{pq}$.
\label{linearlyindependent}
\end{lemma}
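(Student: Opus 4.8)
The plan is to prove the stronger fact that the $(2^m)^2 = 2^{2m}$ matrices $\Gamma^{p,q}$ with $p,q = 0,\dots,2^m-1$ are mutually orthogonal with respect to the trace form on $2^m\times 2^m$ complex matrices, so that in particular they are linearly independent. Since $2^{2m}$ equals the dimension of that matrix space, this is the sharpest possible statement (the $\Gamma^{p,q}$ then even form a basis). The argument uses only the entrywise formula of Definition \ref{gamma} together with Lemma \ref{xorsum} and Lemma \ref{Spcfccshdmrdtrnsfrm}; I would not invoke the multiplication or Kronecker-product theorems, since those are not yet available at this point.

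\textbf{Step 1: support structure and reduction to fixed $p$.} From $\Gamma_{ij}^{pq} = (-1)^{q\cdot i}(i)^{-p\cdot q}\,\delta(i\veebar j\veebar p)$ one reads off that the $(i,j)$ entry is nonzero only when $j = i\veebar p$. Hence the support of $\Gamma^{p,q}$, the set of positions $S_p = \{(i,\,i\veebar p) : i = 0,\dots,2^m-1\}$, depends on $p$ alone, and the $S_p$ are pairwise disjoint: $(i,i\veebar p) = (i',i'\veebar p')$ forces $i = i'$ and then $p = p'$. Therefore, given a linear relation $\sum_{p,q} c_{p,q}\Gamma^{p,q} = 0$, the entry at a position $(a,\,a\veebar p_0)\in S_{p_0}$ receives contributions only from the terms with $p = p_0$, so it suffices to prove linear independence of $\{\Gamma^{p_0,q} : q = 0,\dots,2^m-1\}$ for each fixed $p_0$.

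\textbf{Step 2: Walsh vectors and the orthogonality sum.} For fixed $p_0$ the entry of $\Gamma^{p_0,q}$ at $(i,\,i\veebar p_0)$ is $(-1)^{q\cdot i}(i)^{-p_0\cdot q}$, so restricting to $S_{p_0}$ identifies $\Gamma^{p_0,q}$ with a nonzero scalar multiple (the scalar $(i)^{-p_0\cdot q}$, a fourth root of unity, is independent of the row index) of the Walsh vector $\big((-1)^{q\cdot i}\big)_{i=0}^{2^m-1}$. Concretely, from $\sum_q c_{p_0,q}\Gamma^{p_0,q} = 0$ set $d_q = c_{p_0,q}(i)^{-p_0\cdot q}$, so that $\sum_q d_q(-1)^{q\cdot i} = 0$ for all $i$. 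Multiplying by $(-1)^{q'\cdot i}$, using Lemma \ref{xorsum} and symmetry of the metric to rewrite $(-1)^{q\cdot i}(-1)^{q'\cdot i} = (-1)^{(q\veebar q')\cdot i}$, summing over $i$, and applying Lemma \ref{Spcfccshdmrdtrnsfrm} yields $\sum_q d_q\,2^m\delta(q\veebar q') = 2^m d_{q'} = 0$. Hence $d_{q'} = 0$, so $c_{p_0,q'} = 0$ for every $q'$; ranging over all $p_0$ shows every $c_{p,q}$ vanishes, which is linear independence.

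\textbf{Main obstacle.} I expect no genuine obstacle: the whole argument is bookkeeping with the binary metric. The two points to handle with care are (i) invoking Lemma \ref{xorsum} rather than naively distributing $(\cdot)$ over $\veebar$ when collapsing the product of signs, and (ii) noticing that the phases $(i)^{-p\cdot q}$ make $\Gamma^{p,q}$ non-Hermitian in general — harmless, since linear independence is over $\mathbb{C}$ and the disjoint-support reduction never uses an adjoint. An equivalent packaging computes $\Tr\big((\Gamma^{p,q})^{\dagger}\Gamma^{r,s}\big) = 2^m\,\delta(p\veebar r)\,\delta(q\veebar s)$ directly and appeals to non-degeneracy of the trace form, but the support argument above is shorter and rests on exactly the two lemmas already established.
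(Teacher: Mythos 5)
Your proof is correct, and it takes a genuinely different route from the paper's. The paper establishes Lemma~\ref{linearlyindependent} by citing the trace orthogonality $\Tr(\Gamma^{pq}\Gamma^{rs})=2^m\delta_{pr}\delta_{qs}$ from Lemma~\ref{orth1} together with Hermiticity (Lemma~\ref{hermiticity}), reshaping the $\Gamma$'s into a large $2^{2m}\times 2^{2m}$ matrix $\tilde W$ and arguing that $2^{-m}\tilde W^\dagger$ is its inverse, hence the columns are independent --- this is essentially the ``equivalent packaging'' you flag at the end of your proposal. Your argument instead exploits the disjoint-support structure directly: for fixed $p$, every $\Gamma^{p,q}$ is supported on the generalized diagonal $S_p=\{(i,i\veebar p)\}$, and the $S_p$ partition the index grid, so any linear relation decouples over $p$; restricting to $S_p$ exposes $\Gamma^{p,q}$ as a fixed fourth-root-of-unity multiple of the Walsh vector $\big((-1)^{q\cdot i}\big)_i$, and Walsh orthogonality (Lemma~\ref{Spcfccshdmrdtrnsfrm}) finishes the job. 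Both proofs ultimately rest on the same character-orthogonality sum, but yours has a concrete structural advantage: the paper's proof forward-references Lemmas~\ref{orth1} and~\ref{hermiticity}, which are not stated until two subsections later, whereas your argument uses only Definition~\ref{gamma} and the two binary-field lemmas that precede the statement. It also makes the exact count $2^{2m}$ transparent as $2^m$ disjoint diagonals times $2^m$ independent Walsh vectors per diagonal, i.e.\ the $\Gamma^{p,q}$ are not merely independent but a basis.
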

\begin{proof}
We show that the space of all complex $2^m\times 2^m\times 2^m\times 2^m$-tensors $\mathcal{N}$ represents a matrix space isomorphism of $\mathbb{C}^{n,n}$, if $n=2^m\times 2^m=2^{2m}$. First, there is a one-to-one map of a tensor $\Gamma\in\mathcal{N}$ to a matrix $\mat{W}\in\mathbb{C}^{n,n}$ mediated by the re-indexing $(p,q)\rightarrow s$, and $(j,k) \rightarrow i$, $\Gamma_{jk}^{pq}\rightarrow \mat{W}_i^{\ s}$, i.e., associating each pair of pairs $((p,q),(j,k))$ to an index pair $(s,i)$. The matrix space structure of $\mathcal{N}$ follows immediately. From Lemma \ref{orth1}, we transform
\begin{align}
\sum_{i=0}^{2^m-1}\sum_{k=0}^{2^m-1}
\Gamma_{ik}^{pq}\Gamma_{ki}^{rs}
=
2^m \delta_{pr}\delta_{qs}
\ ,
\end{align}
into
\begin{align}
\sum_{y=0}^{2^m-1}
(\mat{W}_{y}^{d})^* \mat{W}_{y}^{f}
=
2^m \delta_{df}
\ ,
\end{align}
by using lemma \ref{hermiticity}. Since the mapping of $(p,q)\longleftrightarrow d$, $(i,k)\longleftrightarrow y$, and $(r,s)\longleftrightarrow f$, is one-to-one, and $f,d,y= 0,\cdots,2^{2m}-1$ span the same range of integers, then $2^{-m}(\mat{W}_{y}^{d})^*$ is the matrix inverse of $\mat{W}_{y}^{f}$ and all $\Gamma$s are linearly independent.
\end{proof}

\subsection{$\Gamma$-matrices Properties}
\begin{theorem} \textbf{$\Gamma$-matrices Decomposition.}
Concluding from lemma \ref{linearlyindependent}, any complex $2^m\times 2^m$-dimensional matrix $\mat{A}$ can be expanded in terms of the basis $\mat{\Gamma}^{pq}$, where $p,q=0,\dots,2^m-1$,
\begin{equation}
(\mat{A})_{ik} = \sum_{p=0}^{2^m-1}\sum_{q=0}^{2^m-1} \Gamma_{ik}^{pq}a_{pq} \ ,
\end{equation}
where
\begin{equation}
a_{pq} = 2^{-m}\Tr \left(\mat{\Gamma^{pq}}^\dagger \mat{A}\right) \ .
\end{equation}
\label{gammamatricesdecomposition}
\end{theorem}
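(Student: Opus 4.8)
The plan is to combine the linear-independence result of Lemma~\ref{linearlyindependent} with the trace orthogonality of the $\Gamma$-matrices. First I would observe that $\mathbb{C}^{2^m\times 2^m}$ is a vector space of dimension $2^{2m}$, and that Lemma~\ref{linearlyindependent} exhibits exactly $2^m\times 2^m = 2^{2m}$ linearly independent matrices $\mat{\Gamma}^{pq}$ with $p,q=0,\dots,2^m-1$. Having the right count of linearly independent vectors, these form a basis, so every complex $2^m\times 2^m$ matrix $\mat{A}$ admits a unique expansion $\mat{A} = \sum_{p,q} a_{pq}\mat{\Gamma}^{pq}$, which is precisely the first displayed equation in the statement. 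It then only remains to pin down the coefficients $a_{pq}$.

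Second, I would establish the trace orthogonality relation $\Tr\pare{\mat{\Gamma^{rs}}^\dagger \mat{\Gamma^{pq}}} = 2^m\,\delta_{pr}\delta_{qs}$. This follows from Lemma~\ref{orth1}, which supplies the index contraction $\sum_{i,k}\Gamma_{ik}^{pq}\Gamma_{ki}^{rs} = 2^m\delta_{pr}\delta_{qs}$, together with the Hermiticity statement of Lemma~\ref{hermiticity} relating $\pare{\mat{\Gamma}^{rs}}^\dagger$ to $\mat{\Gamma}^{rs}$ with swapped matrix indices and an appropriate phase: writing $\Tr\pare{\mat{\Gamma^{rs}}^\dagger\mat{\Gamma^{pq}}} = \sum_{i,k}\pare{\Gamma^{rs}_{ki}}^{*}\Gamma^{pq}_{ik}$ and substituting that relation turns this sum into exactly the contraction handled by Lemma~\ref{orth1}.

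Third, I would take the expansion $\mat{A} = \sum_{p,q}a_{pq}\mat{\Gamma}^{pq}$, multiply on the left by $\mat{\Gamma^{rs}}^\dagger$, and take the trace. By linearity of the trace and the orthogonality relation just established, $\Tr\pare{\mat{\Gamma^{rs}}^\dagger\mat{A}} = \sum_{p,q}a_{pq}\Tr\pare{\mat{\Gamma^{rs}}^\dagger\mat{\Gamma^{pq}}} = \sum_{p,q}a_{pq}\,2^m\delta_{pr}\delta_{qs} = 2^m a_{rs}$, hence $a_{rs} = 2^{-m}\Tr\pare{\mat{\Gamma^{rs}}^\dagger\mat{A}}$; relabeling $(r,s)\to(p,q)$ gives the second displayed equation and completes the proof.

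The main obstacle I expect is the bookkeeping in the second step: correctly promoting the index-contraction identity of Lemma~\ref{orth1} to the basis-free trace $\Tr\pare{\mat{\Gamma^{rs}}^\dagger\mat{\Gamma^{pq}}}$ via the conjugation rule for $\Gamma$-matrices, i.e.\ verifying that the phase $(i)^{-p\cdot q}$ and the sign $(-1)^{q\cdot i}$ in Definition~\ref{gamma} conjugate in a way that is exactly absorbed by the index swap appearing in the trace. The remaining ingredients — the dimension count, uniqueness of the expansion, and linearity of the trace — are routine.
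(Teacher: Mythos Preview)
Your proposal is correct and matches the paper's approach: the paper gives no separate proof for this theorem, treating it as an immediate consequence of Lemma~\ref{linearlyindependent} (whose own proof already invokes Lemmas~\ref{orth1} and~\ref{hermiticity} to exhibit the inverse map), and your argument simply spells out those same ingredients explicitly. The bookkeeping worry you flag in the second step is a non-issue, since Lemma~\ref{hermiticity} gives $(\Gamma^{rs}_{ki})^* = \Gamma^{rs}_{ik}$ directly, so $\Tr\bigl((\Gamma^{rs})^\dagger\Gamma^{pq}\bigr) = \sum_{i,k}\Gamma^{rs}_{ik}\Gamma^{pq}_{ik} = \sum_{i,k}\Gamma^{pq}_{ik}\Gamma^{rs}_{ki}$ is exactly the contraction in Lemma~\ref{orth1} with no leftover phase.
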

\begin{lemma}
\textbf{$\Gamma$-matrices Hermiticity.} Let $\Gamma_{ik}^{pq}$ be a gamma matrix, then $\Gamma_{ji}^{pq}=(\Gamma_{ij}^{pq})^*$.
\label{hermiticity}
\end{lemma}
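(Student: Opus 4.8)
The plan is to unfold both sides with Definition \ref{gamma} and then reduce the claim to a single congruence about exponents. Writing $\Gamma_{ij}^{pq} = (-1)^{q\cdot i}\, i^{-p\cdot q}\,\delta(i\veebar j\veebar p)$, the right-hand side of the lemma becomes $(\Gamma_{ij}^{pq})^{*} = (-1)^{q\cdot i}\, i^{\,p\cdot q}\,\delta(i\veebar j\veebar p)$, using that $(-1)^{q\cdot i}$ and $\delta(\cdot)$ are real and that $\overline{i^{-p\cdot q}} = i^{\,p\cdot q}$ (a unit-modulus power has conjugate equal to its inverse). The left-hand side is $\Gamma_{ji}^{pq} = (-1)^{q\cdot j}\, i^{-p\cdot q}\,\delta(j\veebar i\veebar p)$. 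First I would note that the two Kronecker-delta factors coincide, since $\veebar$ is commutative (Definition \ref{binaryvectorsandfields}), so $\delta(j\veebar i\veebar p)=\delta(i\veebar j\veebar p)$; when this factor vanishes both sides are $0$ and there is nothing to prove, so it remains only to handle the case $i\veebar j\veebar p=0$, i.e. $p=i\veebar j$.

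In that case, cancelling the common delta, the identity reduces to $(-1)^{q\cdot j}\, i^{-p\cdot q} = (-1)^{q\cdot i}\, i^{\,p\cdot q}$. I would rearrange this into $(-1)^{\,q\cdot i + q\cdot j} = i^{\,2(p\cdot q)} = (-1)^{\,p\cdot q}$, using that the exponent of $-1$ only matters modulo $2$ (so $q\cdot j - q\cdot i \equiv q\cdot i + q\cdot j$) and that $i^{2}=-1$. Thus everything comes down to showing the integer congruence $q\cdot i + q\cdot j \equiv p\cdot q \pmod 2$ under the constraint $p = i\veebar j$.

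This final congruence is exactly what Lemma \ref{xorsum} supplies: applying it with $s=q$ gives $q\cdot(i\veebar j) = q\cdot i + q\cdot j - 2\big(q\cdot(i\land j)\big)$, so modulo $2$ we get $p\cdot q = q\cdot(i\veebar j)\equiv q\cdot i + q\cdot j$, which closes the argument. There is no real obstacle here; the only place that demands care is the bookkeeping of powers of $i$ and of $-1$ under the periodicity $i^4=1$ — in particular remembering that complex conjugation inverts a unit-modulus power, and that once a factor of $i^2$ has been absorbed the remaining exponents are relevant only $\bmod\ 2$, which is precisely the parity that Lemma \ref{xorsum} controls.
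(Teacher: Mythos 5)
Your proof is correct and uses the same essential ingredients as the paper's: unfold Definition \ref{gamma}, observe that the Kronecker delta forces $j=i\veebar p$, and invoke Lemma \ref{xorsum} to reduce the phase comparison to a parity identity for bitwise inner products. The paper runs the algebra slightly differently (rewriting $i^{p\cdot q}=(-1)^{p\cdot q}i^{-p\cdot q}$ and absorbing the sign into $(-1)^{q\cdot(i\veebar p)}$ before substituting $j=i\veebar p$), but the argument is the same.
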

\begin{proof}
Starting from lemma \ref{gamma}
\begin{align}
(\Gamma_{ij}^{pq})^*
&=& \left((-1)^{q\cdot i} (i)^{-p \cdot q} \delta(i\veebar j \veebar p)\right)^*
\\ &=& (-1)^{q\cdot i} (i)^{p \cdot q} \delta(i\veebar j \veebar p)
\\ &=& (-1)^{q\cdot i} (i)^{2(p \cdot q)-p \cdot q} \delta(i\veebar j \veebar p)
\\ &=& (-1)^{q\cdot i+p \cdot q} (i)^{-p \cdot q} \delta(i\veebar j \veebar p)
\end{align}
Using lemma \ref{xorsum}
\begin{align}
(\Gamma_{ij}^{pq})^* 
 &=& (-1)^{q\cdot (i\veebar p)} (i)^{-p \cdot q} \delta(i\veebar j \veebar p)
\ .
\end{align}
Since $\delta(i\veebar j \veebar p)\neq 0$ only when $j=i \veebar p$,
\begin{align}
(\Gamma_{ij}^{pq})^*
&=& (-1)^{q\cdot j} (i)^{-p \cdot q} \delta(i\veebar j \veebar p)
\\ &=& \Gamma_{ji}^{pq}
\ .
\end{align}
\end{proof}
\begin{lemma}\textbf{$\Gamma$-matrices Orthogonality.}
Let $p,q,i,j= 0,\cdots,2^m-1$, and $\Gamma_{ik}^{pq}$ be a gamma matrix, then $\Tr(\Gamma^{pq}\Gamma^{rs})=2^m \delta_{pr}\delta_{qs}$.
\label{orth1}
\end{lemma}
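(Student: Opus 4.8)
The plan is a direct index computation starting from Definition~\ref{gamma}, using Lemmas~\ref{xorsum} and~\ref{Spcfccshdmrdtrnsfrm} to collapse the resulting sum. First I would write
\begin{align}
\Tr(\Gamma^{pq}\Gamma^{rs})
=
\sum_{i=0}^{2^m-1}\sum_{k=0}^{2^m-1}
\Gamma_{ik}^{pq}\Gamma_{ki}^{rs}
=
\sum_{i,k}
(-1)^{q\cdot i}(i)^{-p\cdot q}\delta(i\veebar k\veebar p)\,
(-1)^{s\cdot k}(i)^{-r\cdot s}\delta(k\veebar i\veebar r)
\ .
\end{align}
The two Kronecker deltas are simultaneously nonzero only when $k=i\veebar p$ and $k=i\veebar r$, which forces $p=r$; this produces the factor $\delta_{pr}$ and kills the $k$-sum by setting $k=i\veebar p$.

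Next, with $r=p$ and $k=i\veebar p$, the surviving expression is $(i)^{-p\cdot q}(i)^{-p\cdot s}\sum_{i}(-1)^{q\cdot i + s\cdot(i\veebar p)}$. Applying Lemma~\ref{xorsum} to $s\cdot(i\veebar p)$ and discarding the even $-2(\cdots)$ term gives $(-1)^{s\cdot(i\veebar p)}=(-1)^{s\cdot i + s\cdot p}$; applying Lemma~\ref{xorsum} again to $q\cdot i + s\cdot i$ gives $(-1)^{q\cdot i+s\cdot i}=(-1)^{(q\veebar s)\cdot i}$. Hence the $i$-sum becomes $(-1)^{s\cdot p}\sum_{i=0}^{2^m-1}(-1)^{(q\veebar s)\cdot i}$, which by Lemma~\ref{Spcfccshdmrdtrnsfrm} (with one of its two labels taken to be $0$) equals $(-1)^{s\cdot p}\,2^m\,\delta(q\veebar s)$, i.e.\ it vanishes unless $q=s$, yielding the factor $2^m\delta_{qs}$.

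Finally I would check that the accumulated phase is trivial on the support $p=r$, $q=s$: there the prefactor is $(i)^{-p\cdot q}(i)^{-p\cdot q}(-1)^{q\cdot p}=(i)^{-2(p\cdot q)}(-1)^{p\cdot q}=(-1)^{-(p\cdot q)}(-1)^{p\cdot q}=1$. Combining the three factors gives $\Tr(\Gamma^{pq}\Gamma^{rs})=2^m\delta_{pr}\delta_{qs}$. The only delicate bookkeeping is the repeated use of Lemma~\ref{xorsum} to move everything into a single $(-1)^{(q\veebar s)\cdot i}$ exponent and the verification that the leftover power of $i$ collapses to $1$; neither step is deep, so I expect no real obstacle beyond care with the bitwise identities.
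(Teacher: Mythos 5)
Your proposal is correct and follows essentially the same route as the paper: expand the trace via Definition~\ref{gamma}, use the two Kronecker deltas to collapse the double sum (the paper substitutes $i\Rightarrow k\veebar p$ and reads off $\delta(p\veebar r)$; you solve $k=i\veebar p$ and observe $p=r$ at once, which is the same step), apply Lemma~\ref{xorsum} to combine exponents into $(-1)^{(q\veebar s)\cdot i}$, invoke Lemma~\ref{Spcfccshdmrdtrnsfrm} to obtain $2^m\delta(q\veebar s)$, and finally check the residual phase is $1$ on the support $p=r,\,q=s$. The only cosmetic difference is the order in which the two deltas are extracted; the aside about "one of its two labels taken to be $0$" is unnecessary, since $\sum_i(-1)^{(q\veebar s)\cdot i}$ already matches the lemma's form directly.
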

\begin{proof}
By substituting Def. \ref{gamma} into Lemma \ref{orth1},
\begin{align}
\sum_{i=0}^{2^m-1}\sum_{k=0}^{2^m-1}
\Gamma_{ik}^{pq}\Gamma_{ki}^{rs}
= \nonumber \\ 
\sum_{i=0}^{2^m-1}\sum_{k=0}^{2^m-1}
(-1)^{q\cdot i} (i)^{-p \cdot q} \delta(i\veebar k \veebar p)
\nonumber \\  (-1)^{s\cdot k} (i)^{-r \cdot s} \delta(k\veebar i \veebar r)
\ .
\end{align}
Evaluate the sum over non-zero values for $\delta(i\veebar k \veebar p)$ by substituting $i\Rightarrow k\veebar p$,
\begin{align}
\sum_{i=0}^{2^m-1}\sum_{k=0}^{2^m-1}
\Gamma_{ik}^{pq}\Gamma_{ki}^{rs}
=
\sum_{k=0}^{2^m-1}
(-1)^{q\cdot (k\veebar p)} (i)^{-p \cdot q} 
\nonumber \\  (-1)^{s\cdot k} (i)^{-r \cdot s} \delta(k\veebar (k\veebar p) \veebar r)
\ .
\end{align}
By substituting $k\veebar k = 0$,
\begin{align}
\sum_{i=0}^{2^m-1}\sum_{k=0}^{2^m-1}
\Gamma_{ik}^{pq}\Gamma_{ki}^{rs}
= \nonumber \\
\sum_{k=0}^{2^m-1}
(-1)^{q\cdot (k\veebar p)+s\cdot k} (i)^{-p \cdot q-r \cdot s} \delta(p \veebar r)
\ .
\end{align}
Using lemma \ref{xorsum} to substitute $(-1)^{q\cdot (k\veebar p) +s\cdot k}= (-1)^{(s\veebar q)\cdot k + q\cdot p}$,
\begin{align}
\sum_{i=0}^{2^m-1}\sum_{k=0}^{2^m-1}
\Gamma_{ik}^{pq}\Gamma_{ki}^{rs}
= \nonumber \\
\sum_{k=0}^{2^m-1}
(-1)^{(s\veebar q)\cdot k + q\cdot p} (i)^{-p \cdot q-r \cdot s} \delta(p \veebar r)
\ .
\end{align}
Using lemma \ref{Spcfccshdmrdtrnsfrm} to substitute $\sum_{k=0}^{2^m-1}
(-1)^{(s\veebar q)\cdot k } = 2^m \delta(s\veebar q
)$ and $(-1)^{p\cdot q} = i^{2p\cdot q}$,
\begin{align}
\sum_{i=0}^{2^m-1}\sum_{k=0}^{2^m-1}
\Gamma_{ik}^{pq}\Gamma_{ki}^{rs}
=
 2^m \delta(s\veebar q) 
 (i)^{p \cdot q-r \cdot s} \delta(p \veebar r)
 \ .
\end{align}
By substituting, $p \Rightarrow r$ and $q \Rightarrow s$ into $(i)^{p\cdot q - r\cdot s} = 1$,
\begin{align}
 \sum_{i=0}^{2^m-1}\sum_{k=0}^{2^m-1}
\Gamma_{ik}^{pq}\Gamma_{ki}^{rs}
=
 2^m \delta(s\veebar q) \delta(p \veebar r)
 \\ = 
  2^m \delta_{s q} \delta_{p r}
 \ .
\end{align}
\end{proof}

\begin{theorem}
\textbf{Fast $\Gamma$-matrices Transform.} Let $p,q,i,j=0,\cdots ,2^m -1$, and X,Y be $2^m\times 2^m$ matrices, then $X_{p,q} = 2^{-m} \Tr(\Gamma^{p,q} Y)$
can be computed in $\mathcal{O}(N^2\ln(N))$ steps and using in-place memory, where $N=2^m$.
\label{X2Y}
\end{theorem}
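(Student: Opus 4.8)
The plan is to make the cost explicit by unfolding $2^{-m}\Tr(\Gamma^{p,q}Y)$ with Definition \ref{gamma}, identifying what is left as a diagonal phase times a batch of Walsh--Hadamard transforms, and then quoting the known cost of the fast Walsh--Hadamard transform \cite{fino1976unified}. Substituting $\Gamma_{ij}^{p,q}=(-1)^{q\cdot i}(i)^{-p\cdot q}\delta(i\veebar j\veebar p)$ into $\Tr(\Gamma^{p,q}Y)=\sum_{i,j}\Gamma_{ij}^{p,q}Y_{ji}$, the Kronecker delta collapses the $j$-sum by forcing $j=i\veebar p$, leaving
\begin{align}
X_{p,q}=2^{-m}(i)^{-p\cdot q}\sum_{i=0}^{N-1}(-1)^{q\cdot i}\,Y_{i\veebar p,\,i}\ .
\end{align}

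First I would re-index: set $Z_{p,i}:=Y_{i\veebar p,\,i}$. The map sending the entry at $(a,b)$ to the slot $(a\veebar b,\,b)$ is a bijection of $\{0,\dots,N-1\}^2$ onto itself — it is the bitwise CNOT bijection of Lemma \ref{cnot} with the two coordinates interchanged — so it can be performed as an in-place permutation of the $N^2$ stored numbers in $\mathcal{O}(N^2)$ time, or with no data movement at all by iterating over the affine diagonals $\{a\veebar b=p\}$ of $Y$. Next, for each fixed $p$ the remaining sum $\sum_i(-1)^{q\cdot i}Z_{p,i}$ is exactly the Walsh--Hadamard transform of the length-$N$ row $(Z_{p,i})_i$ evaluated at $q$; the butterfly algorithm outputs all $N$ values in $\mathcal{O}(N\ln N)$ in-place operations, so sweeping over the $N$ choices of $p$ costs $\mathcal{O}(N^2\ln N)$. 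Finally I would multiply each entry by the scalar $2^{-m}(i)^{-p\cdot q}$ (one bit-count $p\cdot q$ and one complex multiplication per entry, $\mathcal{O}(N^2)$, in place). Adding the three stages gives $\mathcal{O}(N^2\ln N)$ total, and since each stage overwrites its input the whole transform is in place.

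The routine identities — collapsing the delta, pulling the phase out of the $i$-sum, and recognizing the Walsh--Hadamard kernel — are immediate. The part that needs care is the in-place bookkeeping: turning the re-indexing step into a genuine in-place permutation rather than a fresh $\mathcal{O}(N^2)$ allocation (this is exactly where Lemma \ref{cnot} is used, to certify the relabeling is a permutation so that an in-place cycle-following relabeling exists), and checking that the standard FWHT factorization over the $m$ bits of $i$ applies here verbatim because the kernel splits as $(-1)^{q\cdot i}=\prod_{k}(-1)^{q_k i_k}$ — the same bitwise factorization used in the proof of Lemma \ref{Spcfccshdmrdtrnsfrm}. Both points are matters of careful accounting rather than genuine mathematical difficulty.
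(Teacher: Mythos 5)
Your proof matches the paper's own argument step for step: both substitute Definition~\ref{gamma}, collapse the Kronecker delta to $j=i\veebar p$, pull out the phase $(i)^{-p\cdot q}$, reindex via the bijection of Lemma~\ref{cnot} to obtain $Z_{p,i}=Y_{p\veebar i,\,i}$, and then apply the fast Walsh--Hadamard transform row by row to reach $\mathcal{O}(N^2\ln N)$ in place. The only difference is cosmetic: the paper appends a remark that a naive in-place xor swap can double-swap entries and needs a branching recursion to stay within the claimed bound, whereas you assert the $\mathcal{O}(N^2)$ permutation directly; since the permutation $(a,b)\mapsto(a\veebar b,b)$ is an involution, your assertion is defensible and the overall bound is unaffected.
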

\begin{proof}
We insert Def. \ref{gamma} into theorem \ref{X2Y},
\begin{align}
X_{pq} = 2^{-m} \sum_{i=0}^{2^m-1}\sum_{j=0}^{2^m-1} (-1)^{q\cdot i} (i)^{-p\cdot q} \delta(i\veebar j \veebar p) Y_{ji}
\ .
\end{align}
The term $\delta(i\veebar j \veebar p)$ is zero unless $j=p\veebar i$, thus
\begin{align}
X_{pq} = 2^{-m} \sum_{i=0}^{2^m-1} (-1)^{q\cdot i} (i)^{-p\cdot q}  Y_{p\veebar i,i}
\ .
\end{align}
We factor out terms from the sum not depending on $i$,
\begin{align}
X_{pq} = 2^{-m} (i)^{-p\cdot q}\sum_{i=0}^{2^m-1} (-1)^{q\cdot i}   Y_{p\veebar i,i}
\ .
\end{align}
We define a new matrix $Z_{pi}=Y_{p\veebar i,i}$. From lemma \ref{cnot}, for every index pair $(p,i)$ there is a unique $(p\veebar i,i)$ and vice versa, then all elements in $Y$ can be swapped using in-place memory to obtain $Z$ and consumes $N^2$ number of steps. Thus
\begin{align}
X_{pq} = 2^{-m} (i)^{-p\cdot q}\sum_{i=0}^{2^m-1} (-1)^{q\cdot i}   Z_{pi}
\ .
\end{align}
Each row with fixed $p$ is evaluated independently. The sum over $i$ with kernel function $(-1)^{q\cdot i}$ for each row $p$ is a one-dimensional Walsh-Hadamard transform, which can be computed in $\mathcal{O}(N\ln(N))$ number of steps and using in-place memory \cite{fino1976unified}. Since there are $N$ rows of fixed $p$, the total computation time is $\mathcal{O}(N^2\ln(N))$. 

\textbf{Note.} The xor swap from $Y_{p\veebar i,i}$ to $Z_{pi}$ has a theoretical lower limit of $\mathcal{O}(N^2)$ number of steps. Practical implementations of the xor swap have shown an upper limit of $\mathcal{O}(N^2\log(N))$ steps, because a naive sequential instruction of the xor swap will cause double swapping which will undo the previous attempts to swap elements. A special branching recursive algorithm similar to the fast Walsh-Hadamard transform was required to prevent double swapping of the same elements. The worst column to swap is the last column, because its column index has all of its bits set to one, and since there are $\log(N)$ number of bits in the last column index, it requires $N\log(N)$ number of steps to perform the xor swap. However, all columns before it require fewer steps because their column indices have fewer set bits.
\end{proof}
\begin{theorem}
\textbf{Diagonal elements.} Combining theorem \ref{X2Y} with lemma \ref{orth1}, define the inverse gamma transformation
\begin{align}
Y_{i,j} = \sum_{i=0}^{2^m-1}\sum_{j=0}^{2^m-1} 
\Gamma^{p,q}_{i,j}
X_{p,q}
\ ,
\end{align}
then $X_{0,q}$ contains all information exclusive to the diagonal elements $Y_{i,i}$ and vice versa.
\label{diagonalelements}
\end{theorem}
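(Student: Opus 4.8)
The plan is to unpack Definition~\ref{gamma} at the two index slices $p=0$ and $j=i$, read off the resulting one-dimensional transforms, and then invoke Lemma~\ref{Spcfccshdmrdtrnsfrm} to upgrade the one-way dependences into a genuine bijection between the coefficients $X_{0,q}$ and the diagonal entries $Y_{i,i}$.

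First I would evaluate $\Gamma^{0,q}$ directly. Putting $p=0$ into Definition~\ref{gamma} gives $\Gamma^{0,q}_{i,j}=(-1)^{q\cdot i}(i)^{0}\,\delta(i\veebar j)$, so $\Gamma^{0,q}$ is diagonal with $(i,i)$ entry $(-1)^{q\cdot i}$. Substituting this into the forward transform of Theorem~\ref{X2Y}, namely $X_{p,q}=2^{-m}\Tr(\Gamma^{p,q}Y)$, collapses the trace to a sum over the diagonal of $Y$:
\begin{align}
X_{0,q}=2^{-m}\sum_{i=0}^{2^m-1}(-1)^{q\cdot i}\,Y_{i,i}\ .
\end{align}
Hence $X_{0,q}$ depends on the diagonal entries $\{Y_{i,i}\}$ and on no off-diagonal entry of $Y$.

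Next I would treat the converse slice inside the inverse transformation. Setting $j=i$ in Definition~\ref{gamma} gives $\Gamma^{p,q}_{i,i}=(-1)^{q\cdot i}(i)^{-p\cdot q}\,\delta(p)$, which vanishes unless $p=0$; therefore
\begin{align}
Y_{i,i}=\sum_{p,q}\Gamma^{p,q}_{i,i}\,X_{p,q}=\sum_{q=0}^{2^m-1}(-1)^{q\cdot i}\,X_{0,q}\ ,
\end{align}
so each diagonal element of $Y$ is fixed by $\{X_{0,q}\}$ alone and uses no $X_{p,q}$ with $p\neq0$.

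It remains to turn these two one-way statements into the claimed equivalence. The two displays present $(Y_{i,i})_i\longleftrightarrow(X_{0,q})_q$ as a scalar multiple of a size-$2^m$ Walsh--Hadamard transform with kernel $(-1)^{q\cdot i}$ and its putative inverse. I would compose them, relabel the inner summation bit, use $(-1)^{q\cdot i+q\cdot i'}=(-1)^{q\cdot(i\veebar i')}$ (an even-difference consequence of Lemma~\ref{xorsum}), and apply Lemma~\ref{Spcfccshdmrdtrnsfrm} in the form $\sum_{q}(-1)^{q\cdot(i\veebar i')}=2^m\delta(i\veebar i')$ to see the composition is the identity. This shows the correspondence is a bijection, which is exactly the assertion that $X_{0,q}$ carries precisely the information of $\{Y_{i,i}\}$ and vice versa. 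I do not anticipate a genuine obstacle; the only care needed is in tracking the $2^{-m}$ normalisation and in renaming summation indices so that Lemma~\ref{Spcfccshdmrdtrnsfrm} applies verbatim.
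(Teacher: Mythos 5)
Your proof is correct and follows essentially the same route as the paper: set $j=i$ in the inverse transform, observe that $\delta(i\veebar i\veebar p)=\delta(p)$ kills every $p\neq 0$, and arrive at $Y_{i,i}=\sum_{q}(-1)^{q\cdot i}X_{0,q}$. The paper stops there with the remark that this is an invertible Walsh--Hadamard transform, whereas you additionally derive the forward slice $X_{0,q}=2^{-m}\sum_i(-1)^{q\cdot i}Y_{i,i}$ from the trace formula and verify the composition is the identity via Lemma~\ref{Spcfccshdmrdtrnsfrm}; this is the same argument made fully explicit rather than a different one.
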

\begin{proof}
Starting from definition \ref{gamma}
\begin{align}
Y_{i,j} = \sum_{p=0}^{2^m-1}\sum_{q=0}^{2^m-1} 
(-1)^{q\cdot i} (i)^{-p \cdot q} \delta(i\veebar j \veebar p)
X_{p,q}
\ .
\end{align}
By setting $i=j$
\begin{align}
Y_{i,i} = \sum_{p=0}^{2^m-1}\sum_{q=0}^{2^m-1} 
(-1)^{q\cdot i} (i)^{-p \cdot q} \delta(i\veebar i \veebar p)
X_{p,q}
\ .
\end{align}
Since $i\veebar i=0$, then $p=0$ are the only non-zero terms in the sum, thus
\begin{align}
Y_{i,i} = \sum_{q=0}^{2^m-1} 
(-1)^{q\cdot i}
X_{0,q}
\ ,
\end{align}
which is an invertible Walsh-Hadamard transform.
\end{proof}
\begin{theorem}
\textbf{Gamma-matrices Group Multiplication.}
Let $p,q,r,s,i,k,j=0,\cdots ,2^{m}-1$ and $\Gamma$ be a gamma matrix, then
$\Gamma^{pq}\Gamma^{rs}
= 
f_{p,q}^{r,s}
\Gamma^{(p \veebar r),(q\veebar s)}$,
where
$
f_{p,q}^{r,s}=
(i)^{r\cdot q-p\cdot s} 
(-1)^{ (s\veebar q)\cdot(p \land r) +(p\veebar r)\cdot (q\land s)}
$ are the structure constants. $\Gamma^{p,q}\Gamma^{r,s}$ can be computed in $\mathcal{O}(m)$ number of steps.
\label{mult}
\end{theorem}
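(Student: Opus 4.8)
The plan is to compute the matrix product $\Gamma^{pq}\Gamma^{rs}$ entry by entry directly from Definition \ref{gamma} and then massage the resulting phase into the claimed closed form using Lemma \ref{xorsum}. First I would write
\begin{align}
(\Gamma^{pq}\Gamma^{rs})_{ij}
=
\sum_{k=0}^{2^m-1}
(-1)^{q\cdot i}(i)^{-p\cdot q}\delta(i\veebar k\veebar p)\,
(-1)^{s\cdot k}(i)^{-r\cdot s}\delta(k\veebar j\veebar r)
\ .
\end{align}
The first Kronecker delta is nonzero only for $k=i\veebar p$, so the sum collapses to a single term, and the surviving delta becomes $\delta(i\veebar j\veebar(p\veebar r))$ --- precisely the index factor of $\Gamma^{(p\veebar r),(q\veebar s)}$. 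Hence the product is a scalar multiple of $\Gamma^{(p\veebar r),(q\veebar s)}$, and it remains only to identify that scalar.

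Next I would isolate $f_{p,q}^{r,s}$ as the ratio of the surviving phase $(-1)^{q\cdot i}(i)^{-p\cdot q}(-1)^{s\cdot(i\veebar p)}(i)^{-r\cdot s}$ to the index factor $(-1)^{(q\veebar s)\cdot i}(i)^{-(p\veebar r)\cdot(q\veebar s)}$ of $\Gamma^{(p\veebar r),(q\veebar s)}$. The key simplifications are all instances of Lemma \ref{xorsum}: since a term carrying a factor of $2$ contributes trivially to a power of $-1$, one has $(-1)^{s\cdot(i\veebar p)}=(-1)^{s\cdot i+s\cdot p}$ and $(-1)^{q\cdot i}(-1)^{s\cdot i}=(-1)^{(q\veebar s)\cdot i}$, which cancels all $i$-dependence. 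Then expanding $(p\veebar r)\cdot(q\veebar s)$ by three nested applications of Lemma \ref{xorsum} turns the leftover power of $i$ into $(i)^{r\cdot q+p\cdot s}$ times powers of $-1$ with exponent $q\cdot(p\land r)+s\cdot(p\land r)+(p\veebar r)\cdot(q\land s)$; combining the first two of these via Lemma \ref{xorsum} once more gives $(s\veebar q)\cdot(p\land r)+(p\veebar r)\cdot(q\land s)$. Finally, using $(-1)^{p\cdot s}=(i)^{2(p\cdot s)}$ so that $(-1)^{s\cdot p}(i)^{p\cdot s}=(i)^{-p\cdot s}$ delivers exactly $f_{p,q}^{r,s}=(i)^{r\cdot q-p\cdot s}(-1)^{(s\veebar q)\cdot(p\land r)+(p\veebar r)\cdot(q\land s)}$.

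For the complexity claim I would observe that each of the bit strings $p\veebar r$, $q\veebar s$, $p\land r$, $q\land s$ is computed in $\mathcal{O}(m)$ bit operations, that each metric $(\cdot)$ is a population count over $m$ bits and hence also $\mathcal{O}(m)$, and that assembling one of the four values $\{1,-1,i,-i\}$ from the exponents taken mod $4$ and mod $2$ is $\mathcal{O}(1)$; so forming $\Gamma^{pq}\Gamma^{rs}$ costs $\mathcal{O}(m)$ steps overall.

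The main obstacle I anticipate is bookkeeping rather than any conceptual difficulty: Lemma \ref{xorsum} must be applied at exactly the right places, and one must consistently distinguish the linear terms (which remain genuine powers of $i$) from the doubled cross-terms (which collapse to powers of $-1$), particularly inside the three-fold expansion of $(p\veebar r)\cdot(q\veebar s)$. A secondary subtlety is that exponents of $i$ are only well defined mod $4$ while exponents of $-1$ are only well defined mod $2$, so the identities $(-1)^a=(i)^{2a}$ and $(i)^{3a}=(i)^{-a}$ must be invoked with care whenever the two kinds of factor are merged.
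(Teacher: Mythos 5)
Your proof is correct and follows essentially the same route as the paper's: collapse the sum via $k=i\veebar p$, recognize the surviving delta as the index factor of $\Gamma^{(p\veebar r),(q\veebar s)}$, then repeatedly apply Lemma~\ref{xorsum} to rewrite the leftover phase, with the doubled cross-terms dropping out of the power of $-1$ and the final $(-1)^{p\cdot s}(i)^{p\cdot s}=(i)^{-p\cdot s}$ merge giving the claimed form. The only cosmetic difference is that you frame the extraction of $f_{p,q}^{r,s}$ as a ratio against the target's phase factor, whereas the paper substitutes the target expression directly and reads off the excess, but these are the same calculation.
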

\begin{proof}
By substituting Def. \ref{gamma} into Lemma \ref{mult},
\begin{align}
\sum_{k=0}^{2^m-1}
\Gamma_{ik}^{pq}\Gamma_{kj}^{rs}
= \nonumber \\ 
\sum_{k=0}^{2^m-1}
(-1)^{q\cdot i} (i)^{-p \cdot q} \delta(i\veebar k \veebar p)
\nonumber \\  (-1)^{s\cdot k} (i)^{-r \cdot s} \delta(k\veebar j \veebar r)
\ .
\end{align}
By evaluating the sum of non-zero terms with $k=p\veebar i$ and combining exponents,
\begin{align}
\sum_{k=0}^{2^m-1}
\Gamma_{ik}^{pq}\Gamma_{kj}^{rs}
= 
(-1)^{q\cdot i + s\cdot (p\veebar i)} 
(i)^{-r \cdot s-p \cdot q} 
\nonumber \\
\delta(i\veebar j\veebar p\veebar r)
\ .
\end{align}
Using lemma \ref{xorsum},
\begin{align}
\sum_{k=0}^{2^m-1}
\Gamma_{ik}^{pq}\Gamma_{kj}^{rs}
= 
(-1)^{(q\veebar s)\cdot i + s\cdot p} 
(i)^{-r \cdot s-p \cdot q} 
\nonumber \\
\delta(i\veebar j\veebar p\veebar r)
\ .
\end{align}
By substituting $\Gamma_{ij}^{(p \veebar r),(q\veebar s)}=(-1)^{(q\veebar s)\cdot i } 
(i)^{-(p \veebar r)\cdot (q\veebar s)} 
\delta(i\veebar j\veebar p\veebar r)$,
\begin{align}
\sum_{k=0}^{2^m-1}
\Gamma_{ik}^{pq}\Gamma_{kj}^{rs}
= 
f_{p,q}^{r,s}
\Gamma_{ij}^{(p \veebar r),(q\veebar s)}
\ ,
\end{align}
where $f_{p,q}^{r,s}=(-1)^{ s\cdot p} (i)^{(p \veebar r)\cdot (q\veebar s)-r \cdot s-p \cdot q} $.

By using lemma \ref{xorsum},
\begin{align}
f_{p,q}^{r,s}=
(-1)^{ s\cdot p} 
\nonumber \\ 
(i)^{(p \veebar r)\cdot q+(p \veebar r)\cdot s-2(p \veebar r)\cdot (q\land s)-r \cdot s-p \cdot q} 
\ .
\end{align}
By using lemma \ref{xorsum} again,
\begin{align}
f_{p,q}^{r,s}=
(-1)^{ s\cdot p} (i)^{-r \cdot s-p \cdot q}
\nonumber \\ 
(i)^{p\cdot q+r\cdot q-2(p \land r)\cdot q+p\cdot s+r\cdot s-2(p \land r)\cdot s} 
\nonumber \\
(i)^{-2(q\land s)\cdot p-2(q\land s)\cdot r+4(p \land r)\cdot (q\land s)}
\ .
\end{align}
By reducing terms and using lemma \ref{xorsum},
\begin{align}
f_{p,q}^{r,s}=
(i)^{r\cdot q-p\cdot s} 
\nonumber \\ 
(-1)^{ (s\veebar q)\cdot(p \land r) +(p\veebar r)\cdot (q\land s)}
\ .
\end{align}
All bitwise operations involving $\cdot,\land, \veebar$ on the integers $p,q,r,s$ can be computed in $\mathcal{O}(m)$ number of steps, because $m$ is the number of their bits.
\end{proof}
\begin{theorem}
\textbf{Gamma-matrices Commutation.} Let $p,q,r,s=0,\cdots ,2^m-1$, then $\Gamma^{pq},\Gamma^{rs}$ commute when $(p\cdot s - q \cdot r )\mod 2 = 0$ and anti-commute when $(p\cdot s - q \cdot r )\mod 2 = 1$.
\label{commutation}
\end{theorem}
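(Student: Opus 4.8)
The plan is to read the commutation behavior directly off the structure constants supplied by Theorem \ref{mult}. First I would write down both orderings of the product: Theorem \ref{mult} gives $\Gamma^{pq}\Gamma^{rs} = f_{p,q}^{r,s}\,\Gamma^{(p\veebar r),(q\veebar s)}$, and interchanging the roles of the two index pairs gives $\Gamma^{rs}\Gamma^{pq} = f_{r,s}^{p,q}\,\Gamma^{(r\veebar p),(s\veebar q)}$. Since $\veebar$ is commutative (Definition \ref{vectorbinaryfield}), the two $\Gamma$-matrices appearing on the right-hand sides are literally the same matrix, so the entire question reduces to comparing the two scalar structure constants $f_{p,q}^{r,s}$ and $f_{r,s}^{p,q}$.

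Second, I would plug in the explicit formula $f_{p,q}^{r,s}=(i)^{r\cdot q-p\cdot s}(-1)^{(s\veebar q)\cdot(p\land r)+(p\veebar r)\cdot(q\land s)}$ and perform the pair-swap $(p,q)\leftrightarrow(r,s)$. The $(-1)$-exponent becomes $(q\veebar s)\cdot(r\land p)+(r\veebar p)\cdot(s\land q)$, which is identical to the original exponent because $\veebar$ and $\land$ are symmetric and hence so is the metric $\cdot$ (Definitions \ref{binaryvectorsandfields} and \ref{vectorbinaryfield}). Thus the $(-1)$-factors cancel in the quotient, and only the power of $i$ flips sign: $f_{r,s}^{p,q}=(i)^{p\cdot s-r\cdot q}(-1)^{(s\veebar q)\cdot(p\land r)+(p\veebar r)\cdot(q\land s)}$. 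Consequently $f_{p,q}^{r,s}=(i)^{2(r\cdot q-p\cdot s)}\,f_{r,s}^{p,q}=(-1)^{\,p\cdot s-q\cdot r}\,f_{r,s}^{p,q}$, using $(i)^2=-1$ and the fact that $r\cdot q-p\cdot s$ is an integer so that $(i)^{2k}=(-1)^k$.

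Third, because $\Gamma^{(p\veebar r),(q\veebar s)}$ is one of the $2^m\times 2^m$ linearly independent — in particular nonzero — matrices guaranteed by Lemma \ref{linearlyindependent}, I may cancel it from both sides to obtain $\Gamma^{pq}\Gamma^{rs}=(-1)^{\,p\cdot s-q\cdot r}\,\Gamma^{rs}\Gamma^{pq}$. Reading off the sign then finishes the proof: when $(p\cdot s-q\cdot r)\bmod 2=0$ the prefactor is $+1$ and the two gamma matrices commute, while when $(p\cdot s-q\cdot r)\bmod 2=1$ the prefactor is $-1$ and they anti-commute.

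I do not expect a genuine obstacle here; the proof is essentially a two-line manipulation of the structure constant. The only points that need care are verifying that the $(-1)$-exponent is invariant under the pair-swap (immediate from symmetry of $\veebar$, $\land$, and $\cdot$) and that the $i$-exponent is integral so that squaring it produces a clean sign $(-1)^{p\cdot s-q\cdot r}$.
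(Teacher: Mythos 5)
Your proof is correct and takes essentially the same route as the paper: both compare the two structure constants $f_{p,q}^{r,s}$ and $f_{r,s}^{p,q}$ supplied by Theorem~\ref{mult}, use the symmetry of $\veebar$, $\land$, $\cdot$ to see that the $(-1)$ factor is invariant under the pair-swap while the $i$-exponent flips sign, and conclude $f_{p,q}^{r,s} = (-1)^{p\cdot s - q\cdot r} f_{r,s}^{p,q}$. The only cosmetic difference is that the paper organizes the computation around $\Gamma^{pq}\Gamma^{rs} \pm \Gamma^{rs}\Gamma^{pq}$ and the quotient $f_{p,q}^{r,s}/f_{r,s}^{p,q}$, whereas you compare the two products directly and appeal to Lemma~\ref{linearlyindependent} to cancel the common $\Gamma^{(p\veebar r),(q\veebar s)}$.
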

\begin{proof}
Using theorem \ref{mult}, express both commuting and anti-commuting cases with $\pm 1$.
\begin{align}
\Gamma^{pq}\Gamma^{rs} \pm \Gamma^{rs}\Gamma^{pq}
= \nonumber \\
f_{p,q}^{r,s}
\Gamma^{(p \veebar r),(q\veebar s)}
\pm
f_{p,q}^{r,s}
\Gamma^{(r\veebar p),(s\veebar q)}
\ .
\end{align}
Using the commutativity of XOR, $(s\veebar q)=(q\veebar s)$,
\begin{align}
\Gamma^{pq}\Gamma^{rs} \pm \Gamma^{rs}\Gamma^{pq}
= \nonumber \\
(f_{p,q}^{r,s}
\pm
f_{r,s}^{p,q})
\Gamma^{(p \veebar r),(q\veebar s)}
\ .
\end{align}
Using the fact $(f_{r,s}^{p,q})\neq 0$,
\begin{align}
\Gamma^{pq}\Gamma^{rs} \pm \Gamma^{rs}\Gamma^{pq}
= \nonumber \\
f_{r,s}^{p,q} \left(\dfrac{f_{p,q}^{r,s}}{f_{r,s}^{p,q}}
\pm
1\right)
\Gamma^{(p \veebar r),(q\veebar s)}
\ .
\label{com1}
\end{align}
Evaluating the fractions separately,
\begin{align}
\dfrac{f_{p,q}^{r,s}}{f_{r,s}^{p,q}}
=
\dfrac{(i)^{r\cdot q-p\cdot s} 
(-1)^{ (s\veebar q)\cdot(p \land r) +(p\veebar r)\cdot (q\land s)}
}{
(i)^{p\cdot s-r\cdot q} 
(-1)^{ (q\veebar s)\cdot(r \land p) +(r\veebar p)\cdot (s\land q)}
}
\end{align}
Using commutativity of the operators $\cdot,\land,\veebar$,
\begin{align}
\dfrac{f_{p,q}^{r,s}}{f_{r,s}^{p,q}}
&=&
(i)^{2(r\cdot q-p\cdot s)}
\\&=&
(-1)^{(r\cdot q-p\cdot s)}
\label{com2}
\end{align}
Combining equations \ref{com1} and \ref{com2},
\begin{align}
\Gamma^{pq}\Gamma^{rs} \pm \Gamma^{rs}\Gamma^{pq}
= \nonumber \\
f_{r,s}^{p,q} \left((-1)^{(r\cdot q-p\cdot s)}
\pm
1\right)
\Gamma^{(p \veebar r),(q\veebar s)}
\ .
\end{align}
Thus $\Gamma^{pq},\Gamma^{rs}$ commute when $(-1)^{(r\cdot q-p\cdot s)}=1$ and anti-commute when $(-1)^{(r\cdot q-p\cdot s)}=-1$. 
\end{proof}
\begin{theorem}
\textbf{Gamma-matrices Kronecker Product.} Let $n,m$ be positive integers, $p,q=0,\cdots ,2^m-1$, $u,v=0,\cdots ,2^n-1$, $p\oplus u =  2^n p+ u=0,\cdots , 2^{n+m}-1$, $a,b=0,\cdots , 2^{n+m}-1$, and $\Gamma$ be a gamma matrix, then $(\Gamma^{pq}\otimes \Gamma^{uv})_{ab}=\Gamma_{ab}^{p\oplus u,q\oplus v}$.
\label{kroneckerproduct}
\end{theorem}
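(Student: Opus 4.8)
The plan is to verify the identity entry by entry, reducing it to Definition \ref{gamma} applied blockwise together with a few one-line facts about the concatenation operator $\oplus$. Recall that for a $2^m\times 2^m$ matrix $A$ and a $2^n\times 2^n$ matrix $B$ the Kronecker product is defined so that the entry of $A\otimes B$ in row $a$ and column $b$ equals $A_{i_1j_1}B_{i_2j_2}$, where $a=2^n i_1+i_2=i_1\oplus i_2$ and $b=2^n j_1+j_2=j_1\oplus j_2$; that is, the high $m$ bits of $a$ are the bits of $i_1$ and its low $n$ bits are the bits of $i_2$, and likewise for $b$. This decomposition is a bijection between $\{0,\dots,2^{n+m}-1\}$ and the pairs $(i_1,i_2)$, so the index ranges in the statement take care of themselves. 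First I would write
\begin{align}
(\Gamma^{pq}\otimes\Gamma^{uv})_{ab}=\Gamma^{pq}_{i_1j_1}\,\Gamma^{uv}_{i_2j_2}
\end{align}
and substitute Definition \ref{gamma} into each factor, collecting the product into one sign $(-1)^{q\cdot i_1+v\cdot i_2}$, one phase $(i)^{-(p\cdot q)-(u\cdot v)}$, and the pair of Kronecker deltas $\delta(i_1\veebar j_1\veebar p)\,\delta(i_2\veebar j_2\veebar u)$.

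Next I would record three elementary identities for $\oplus$, each immediate from the fact that $x\oplus y$ simply stacks the bits of $x$ above the bits of $y$ and that the operations in Definition \ref{vectorbinaryfield} are bitwise: (i) the metric splits over blocks of matching lengths, $(p\oplus u)\cdot(i_1\oplus i_2)=p\cdot i_1+u\cdot i_2$, and in particular $(p\oplus u)\cdot(q\oplus v)=p\cdot q+u\cdot v$; (ii) XOR acts blockwise, so $a\veebar b\veebar(p\oplus u)=(i_1\veebar j_1\veebar p)\oplus(i_2\veebar j_2\veebar u)$; (iii) the Kronecker delta factorizes over a concatenation, $\delta(x\oplus y)=\delta(x)\,\delta(y)$, since a stacked word vanishes if and only if both of its blocks vanish.

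Then I would expand the right-hand side directly from Definition \ref{gamma} with first superscript $p\oplus u$ and second superscript $q\oplus v$,
\begin{align}
\Gamma^{p\oplus u,\,q\oplus v}_{ab}=(-1)^{(q\oplus v)\cdot a}\,(i)^{-(p\oplus u)\cdot(q\oplus v)}\,\delta\!\big(a\veebar b\veebar(p\oplus u)\big),
\end{align}
and use (i) to rewrite $(q\oplus v)\cdot a=q\cdot i_1+v\cdot i_2$ and $(p\oplus u)\cdot(q\oplus v)=p\cdot q+u\cdot v$, and (ii)--(iii) to rewrite the delta as $\delta(i_1\veebar j_1\veebar p)\,\delta(i_2\veebar j_2\veebar u)$. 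Comparing this with the substituted product $\Gamma^{pq}_{i_1j_1}\Gamma^{uv}_{i_2j_2}$ term by term gives equality, which completes the proof.

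The only real obstacle is keeping the conventions aligned: one must match the high/low split of $a$ and $b$ simultaneously with the convention $p\oplus u=2^n p+u$ used in the statement (so $p$ occupies the high bits) and with the standard block structure of $A\otimes B$ (in which the first, larger-stride factor indexes the outer blocks). Once that alignment is fixed there is no analytic content; identities (i)--(iii) are one-line consequences of the bitwise definitions, and the argument is an entry-wise bookkeeping check in the same spirit as the proof of Theorem \ref{mult}.
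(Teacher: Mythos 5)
Your proof is correct and follows essentially the same route as the paper's: split the indices $a,b$ into high/low blocks, substitute Definition \ref{gamma} into each factor, and use the blockwise behavior of $\cdot$, $\veebar$, and $\delta$ under concatenation (which you spell out as identities (i)--(iii), where the paper compresses them into a single appeal to "linear independence"). The extra care you take in aligning the $2^n p+u$ convention with the block structure of the Kronecker product is a nice explicitness but not a departure from the paper's argument.
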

\begin{proof}
Inserting definition \ref{gamma} into theorem \ref{kroneckerproduct},
\begin{align}
(\Gamma^{pq}\otimes \Gamma^{uv})_{ab}
= 
\Gamma^{pq}_{ij} \Gamma^{uv}_{kl}
= \nonumber \\
(-1)^{q\cdot i} (i)^{-p \cdot q} \delta(i\veebar j \veebar p)
\nonumber \\
(-1)^{v\cdot k} (i)^{-u \cdot v} \delta(k\veebar l \veebar u)
\ ,
\end{align}
where $i,j=0,\cdots 2^m-1$, $k,l=0,\cdots , 2^n-1$, $a= i\oplus k$, and $b=j\oplus l$. By combining exponents
\begin{align}
\Gamma^{pq}_{ij} \Gamma^{uv}_{kl}
= \nonumber \\
(-1)^{q\cdot i+v\cdot k} (i)^{-p \cdot q-u \cdot v} \delta(i\veebar j \veebar p)
\delta(k\veebar l \veebar u)
\ .
\end{align}
By linear independence, $q\cdot i+v\cdot k = (q\oplus v)\cdot (i\oplus k)$, $p \cdot q+u \cdot v = (p\oplus u)\cdot (q\oplus v)$, and $\delta(i\veebar j \veebar p)
\delta(k\veebar l \veebar u)=\delta((i\oplus k)\veebar (j\oplus l) \veebar (p\oplus u))$, thus
\begin{align}
\Gamma^{pq}_{ij} \Gamma^{uv}_{kl}
= 
\Gamma^{p\oplus u,q\oplus v}_{i\oplus k,j\oplus l} 
= 
\Gamma^{p\oplus u,q\oplus v}_{a,b} 
\ .
\end{align}
\end{proof}
\begin{lemma}
\textbf{Gamma Triple Product}
Let $n$ be a positive integer, $r,s,p,q,a,b=0,\cdots , 2^n-1$, $f_{rs}^{pq}$ be the structure constants from theorem \ref{mult}, then
\begin{align}
f_{p\veebar a,q\veebar b}^{r,s}=
f_{p,q}^{r,s}
f_{a,b}^{r,s}
g_{p,q,a,b}^{r,s}
\ ,
\end{align}
where
\begin{align}
g_{p,q,a,b}^{r,s}
=
(-1)^{(r\veebar s)\cdot (p\veebar q)\cdot (a\veebar b)
+
r\cdot a\cdot p + s \cdot b \cdot q
} 
\ .
\end{align}
By exchanging two of the three groups $(r,s),(p,q),(a,b)$, $g_{p,q,a,b}^{r,s}$ remains unchanged.
\label{tripleproduct}
\end{lemma}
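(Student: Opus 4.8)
The plan is to prove the identity by direct computation from the closed form of the structure constants in Theorem~\ref{mult}, namely $f_{p,q}^{r,s}=(i)^{r\cdot q-p\cdot s}(-1)^{(s\veebar q)\cdot(p\land r)+(p\veebar r)\cdot(q\land s)}$. First I would substitute $p\mapsto p\veebar a$ and $q\mapsto q\veebar b$ to obtain $f_{p\veebar a,q\veebar b}^{r,s}$, form the product $f_{p,q}^{r,s}f_{a,b}^{r,s}$ separately, and show that the ratio of the two is exactly $g_{p,q,a,b}^{r,s}$. Throughout I would track $(i)$-exponents modulo $4$ and $(-1)$-exponents modulo $2$, and use the symmetric cubic shorthand $u\cdot v\cdot w:=u\cdot(v\land w)=\sum_j u_jv_jw_j$ already adopted in Definition~\ref{vectorbinaryfield}.

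Step one is the $(i)$-exponent of $f_{p\veebar a,q\veebar b}^{r,s}$: by Lemma~\ref{xorsum}, $r\cdot(q\veebar b)=r\cdot q+r\cdot b-2\,r\cdot q\cdot b$ and $(p\veebar a)\cdot s=p\cdot s+a\cdot s-2\,p\cdot a\cdot s$, so that exponent equals $(r\cdot q-p\cdot s)+(r\cdot b-a\cdot s)-2\,r\cdot q\cdot b+2\,p\cdot a\cdot s$; since $(i)^{\pm 2k}=(-1)^{k}$, this produces the $(i)$-parts of $f_{p,q}^{r,s}$ and $f_{a,b}^{r,s}$ together with a residual sign $(-1)^{r\cdot q\cdot b+p\cdot a\cdot s}$. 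Step two is the sign-exponent $E_1=(s\veebar q\veebar b)\cdot((p\veebar a)\land r)+(p\veebar a\veebar r)\cdot((q\veebar b)\land s)$ of $f_{p\veebar a,q\veebar b}^{r,s}$: distributing $\land$ over $\veebar$ by Definition~\ref{binaryvectorsandfields} and reducing each $u\cdot(v\veebar w)\equiv u\cdot v+u\cdot w\pmod 2$ by Lemma~\ref{xorsum}, $E_1$ becomes a sum of twelve symmetric cubic monomials in $\{p,q,r,s,a,b\}$. Step three does the same for $E_2$, the sum of the sign-exponents of $f_{p,q}^{r,s}$ and $f_{a,b}^{r,s}$, giving eight such monomials; adding $E_1+E_2$ mod $2$, the eight monomials of $E_2$ cancel eight of the twelve of $E_1$, leaving $b\cdot p\cdot r+a\cdot q\cdot r+a\cdot q\cdot s+b\cdot p\cdot s$. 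Combining this with the residual sign from step one, the exponent of the ratio $f_{p\veebar a,q\veebar b}^{r,s}/(f_{p,q}^{r,s}f_{a,b}^{r,s})$ is $r\cdot q\cdot b+p\cdot a\cdot s+b\cdot p\cdot r+a\cdot q\cdot r+a\cdot q\cdot s+b\cdot p\cdot s$ modulo $2$.

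Step four is to expand the claimed exponent of $g$ and match: inside each bit, $(r\veebar s)\cdot(p\veebar q)\cdot(a\veebar b)=\sum_j(r_j+s_j)(p_j+q_j)(a_j+b_j)$ is the sum of all eight cubic monomials with one factor from each of $\{r,s\}$, $\{p,q\}$, $\{a,b\}$; adding $r\cdot a\cdot p+s\cdot b\cdot q$ cancels two of them, leaving exactly the same six-monomial sum found above, so $f_{p\veebar a,q\veebar b}^{r,s}=f_{p,q}^{r,s}f_{a,b}^{r,s}\,g_{p,q,a,b}^{r,s}$. For the symmetry claim, the trilinear term $(r\veebar s)\cdot(p\veebar q)\cdot(a\veebar b)$ is manifestly invariant under any permutation of the three pairs, and each of $r\cdot a\cdot p$ and $s\cdot b\cdot q$ is totally symmetric in its three arguments and hence unchanged when two of $(r,s),(p,q),(a,b)$ are exchanged; therefore $g_{p,q,a,b}^{r,s}$ is invariant under exchanging any two of the three groups.

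The main obstacle is purely bookkeeping: $E_1$, $E_2$, and the $g$-exponent each pass through several nested applications of Lemma~\ref{xorsum} and the distributive law, producing on the order of a dozen cubic bitwise monomials that must be matched exactly, and one must correctly split the modulo-$4$ arithmetic of the $(i)$-powers into a modulo-$2$ sign plus a carry that also becomes a sign. There is no conceptual difficulty; organizing the expansion so that the $f_{p,q}^{r,s}$ and $f_{a,b}^{r,s}$ factors are pulled out first, and only the residual cross-terms are tracked, keeps the count of monomials manageable.
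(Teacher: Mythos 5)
Your proof is correct and takes essentially the same route as the paper's: substitute $p\mapsto p\veebar a$, $q\mapsto q\veebar b$ into the closed form for $f$, split the $(i)$-exponent via Lemma~\ref{xorsum} to peel off the $(i)$-parts of $f_{p,q}^{r,s}$ and $f_{a,b}^{r,s}$ plus a residual $(-1)^{r\cdot q\cdot b+p\cdot a\cdot s}$, and reduce the $(-1)$-exponent mod $2$ to isolate the remaining six cubic monomials, which match the expansion of $(r\veebar s)\cdot(p\veebar q)\cdot(a\veebar b)+r\cdot a\cdot p+s\cdot b\cdot q$. The only addition is that you actually justify the pair-exchange invariance of $g$ (via symmetry of the trilinear term and of the two correction monomials), which the paper's proof merely asserts.
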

\begin{proof}
Starting from theorem \ref{mult}
\begin{align}
f_{p\veebar a,q\veebar b}^{r,s}=
(i)^{r\cdot (q\veebar b)-(p\veebar a)\cdot s} 
\nonumber \\ 
(-1)^{ (s\veebar q\veebar b)\cdot((p\veebar a) \land r) +(p\veebar a\veebar r)\cdot ((q\veebar b)\land s)}
\ .
\end{align}
Using lemma \ref{xorsum}
\begin{align}
f_{p\veebar a,q\veebar b}^{r,s}=
(i)^{r\cdot q-p\cdot s} 
(i)^{r\cdot b-a\cdot s} 
(-1)^{r\cdot q\cdot  b+p\cdot a\cdot s} 
\nonumber \\ 
(-1)^{ (s\veebar q\veebar b)\cdot((p\veebar a) \land r) +(p\veebar a\veebar r)\cdot ((q\veebar b)\land s)}
\ .
\end{align}
By distribution 
\begin{align}
f_{p\veebar a,q\veebar b}^{r,s}=
(i)^{r\cdot q-p\cdot s} 
(i)^{r\cdot b-a\cdot s} 
(-1)^{r\cdot q\cdot  b+p\cdot a\cdot s} 
\nonumber \\ 
(-1)^{ (s\veebar q\veebar b)\cdot p \cdot r +(p\veebar a\veebar r)\cdot q\cdot s}
\nonumber \\ 
(-1)^{ (s\veebar q\veebar b)\cdot a \cdot r +(p\veebar a\veebar r)\cdot b\cdot s}
\ .
\end{align}
By distribution 
\begin{align}
f_{p\veebar a,q\veebar b}^{r,s}=
(i)^{r\cdot q-p\cdot s} 
(i)^{r\cdot b-a\cdot s} 
(-1)^{r\cdot q\cdot  b + p\cdot a\cdot s} 
\nonumber \\ 
(-1)^{ (s\veebar q)\cdot p \cdot r + (p\veebar r)\cdot q\cdot s}
\nonumber \\ 
(-1)^{ b\cdot p \cdot r + a\cdot q\cdot s}
\nonumber \\ 
(-1)^{ q\cdot a \cdot r + p\cdot b\cdot s}
\nonumber \\ 
(-1)^{ (s\veebar b)\cdot a \cdot r + (a\veebar r)\cdot b\cdot s}
\ .
\end{align}
Using theorem \ref{mult}
\begin{align}
f_{p\veebar a,q\veebar b}^{r,s}=
f_{p,q}^{r,s}
f_{a,b}^{r,s}
(-1)^{r\cdot q\cdot  b + p\cdot a\cdot s} 
\nonumber \\ 
(-1)^{ b\cdot p \cdot r + a\cdot q\cdot s}
\nonumber \\ 
(-1)^{ q\cdot a \cdot r + p\cdot b\cdot s}
\ .
\end{align}
Using the fact that the remaining six exponents contain triple products of no two terms in the same groups of $(r,s),(p,q),(a,b)$, then expanding the product $(r\veebar s)\cdot (p\veebar q)\cdot (a\veebar b)$ into eight terms must contain six of them excluding two terms not found, $r\cdot a\cdot p$, and $s \cdot b \cdot q$. Thus
\begin{align}
f_{p\veebar a,q\veebar b}^{r,s}=
f_{p,q}^{r,s}
f_{a,b}^{r,s}
g_{p,q,a,b}^{r,s}
\ ,
\end{align}
where
\begin{align}
g_{p,q,a,b}^{r,s}
=
(-1)^{(r\veebar s)\cdot (p\veebar q)\cdot (a\veebar b)
+
r\cdot a\cdot p + s \cdot b \cdot q
} 
\ .
\end{align}
By exchanging two of the three groups $(r,s),(p,q),(a,b)$, $g_{p,q,a,b}^{r,s}$ remains unchanged.
\end{proof}

\subsection{Diagonalization in the Clifford Algebra representation}
\begin{lemma}
\textbf{Unitary transformation of single parameter.} Let $U  = \cos(\phi) I - i \sin(\phi) \Gamma_{r,s}$, $h_{p,q},h_{p,q}'$ be real,
\begin{align}
H=\sum_{p,q}h_{p,q}\Gamma^{p,q}
&&
H'=\sum_{p,q}h_{p,q}'\Gamma^{p,q}
\ ,
\end{align}
and $H'=U(\phi)HU^\dagger(\phi)$, then
\begin{align}
(h')_{p,q}
=
\delta\pare{p\cdot s - q\cdot r \stackrel{2}{=} 0 }  h_{p,q}
\nonumber \\ +  
\delta\pare{p\cdot s - q\cdot r \stackrel{2}{=} 1 }
\nonumber \\ \bigg(
h_{p,q} \cos(2\phi)
+
i\sin(2\phi)  f^{pq}_{rs}h_{p\veebar r,q\veebar s}
\bigg)
\ ,
\end{align}
where $a\stackrel{c}{=}b$ is shorthand for $(a \mod c) = (b \mod c)$.
\label{unitarytransformationsingleparameter}
\end{lemma}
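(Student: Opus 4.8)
The plan is to expand $H' = U H U^{\dagger}$ term by term in the $\Gamma$-basis and read off each coefficient $h'_{p,q}$. First I would record two structural facts about $U = \cos(\phi)I - i\sin(\phi)\Gamma^{r,s}$: by Lemma~\ref{hermiticity} the matrix $\Gamma^{r,s}$ is Hermitian, and by Theorem~\ref{mult} the structure constant $f^{r,s}_{r,s}=1$, so $(\Gamma^{r,s})^{2}=\Gamma^{0,0}=I$. Hence $U U^{\dagger} = \cos^{2}(\phi)I + \sin^{2}(\phi)(\Gamma^{r,s})^{2} = I$, i.e.\ $U$ is unitary, and $H'$ is Hermitian with real $\Gamma$-coefficients.

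Next I would compute the action on a single basis element $G=\Gamma^{p,q}$ via $U G U^{\dagger} = \cos^{2}(\phi)\,G - i\sin(\phi)\cos(\phi)\,[\Gamma^{r,s},G] + \sin^{2}(\phi)\,\Gamma^{r,s}G\,\Gamma^{r,s}$. By Theorem~\ref{commutation}, $G$ commutes with $\Gamma^{r,s}$ exactly when $p\cdot s - q\cdot r \stackrel{2}{=} 0$ and anti-commutes when $p\cdot s - q\cdot r \stackrel{2}{=} 1$. In the commuting case $[\Gamma^{r,s},G]=0$ and $\Gamma^{r,s}G\,\Gamma^{r,s}=G(\Gamma^{r,s})^{2}=G$, so $U G U^{\dagger}=G$ and that term is untouched. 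In the anti-commuting case $\Gamma^{r,s}G=-G\,\Gamma^{r,s}$, hence $[\Gamma^{r,s},G]=2\,\Gamma^{r,s}G$ and $\Gamma^{r,s}G\,\Gamma^{r,s}=-G$; the double-angle identities collapse this to $U G U^{\dagger} = \cos(2\phi)\,G + i\sin(2\phi)\,G\,\Gamma^{r,s}$. Then Theorem~\ref{mult} rewrites $G\,\Gamma^{r,s}=\Gamma^{p,q}\Gamma^{r,s}=f^{p,q}_{r,s}\,\Gamma^{p\veebar r,\,q\veebar s}$, and Lemma~\ref{xorsum} gives $(p\veebar r)\cdot s-(q\veebar s)\cdot r \equiv p\cdot s-q\cdot r \pmod 2$, so $\Gamma^{p\veebar r,\,q\veebar s}$ remains in the same commutation sector as $\Gamma^{p,q}$.

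Summing over $p,q$ then expresses $H'$ as the unchanged commuting part plus $\sum_{\text{anti}} h_{p,q}\bigl(\cos(2\phi)\,\Gamma^{p,q}+i\sin(2\phi)\,f^{p,q}_{r,s}\,\Gamma^{p\veebar r,\,q\veebar s}\bigr)$. To isolate $h'_{p,q}$ I would relabel the mixing piece under the involution $(p,q)\mapsto(p\veebar r,\,q\veebar s)$, which is a bijection of the anti-commuting index set onto itself, so that the two surviving contributions to a fixed $\Gamma^{p,q}$ become visible: a $\cos(2\phi)\,h_{p,q}$ term from the input term at $(p,q)$ and an $i\sin(2\phi)$ term proportional to $h_{p\veebar r,\,q\veebar s}$ from the input term at $(p\veebar r,\,q\veebar s)$. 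Combining the commuting and anti-commuting cases and packaging them with the Kronecker deltas $\delta(p\cdot s-q\cdot r \stackrel{2}{=} 0)$ and $\delta(p\cdot s-q\cdot r \stackrel{2}{=} 1)$ yields the claimed identity.

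The step requiring care is this final bookkeeping of coefficients: after relabeling, the structure constant that multiplies $h_{p\veebar r,\,q\veebar s}$ is the one for $\Gamma^{p\veebar r,\,q\veebar s}\Gamma^{r,s}$, and to match it to the $f^{p,q}_{r,s}$ written in the statement one uses the identity $f^{p,q}_{r,s}\,f^{p\veebar r,\,q\veebar s}_{r,s}=1$ (immediate from $(\Gamma^{r,s})^{2}=I$) together with the fact that on the anti-commuting sector $\Gamma^{p,q}\Gamma^{r,s}$ is anti-Hermitian, so $f^{p,q}_{r,s}\in\{i,-i\}$ and inversion is a sign flip. Everything else is routine double-angle algebra and the bitwise identities of Lemma~\ref{xorsum}.
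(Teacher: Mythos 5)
Your expansion $U\Gamma^{p,q}U^{\dagger}=\cos^{2}\phi\,\Gamma^{p,q}-i\sin\phi\cos\phi\,[\Gamma^{r,s},\Gamma^{p,q}]+\sin^{2}\phi\,\Gamma^{r,s}\Gamma^{p,q}\Gamma^{r,s}$, the split into commuting and anti-commuting sectors via Theorem~\ref{commutation}, and the relabeling $(p,q)\mapsto(p\veebar r,q\veebar s)$ in the mixing sum all parallel the paper's argument. The genuine difference is in how you dispose of the conjugation piece: the paper invokes Lemma~\ref{tripleproduct} to reduce $f_{p\veebar r,q\veebar s}^{r,s}$, whereas you note that $f^{r,s}_{r,s}=1$ gives $(\Gamma^{r,s})^{2}=I$, from which $f_{p,q}^{r,s}\,f_{p\veebar r,q\veebar s}^{r,s}=1$ follows in one line. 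This is shorter and avoids the triple-product machinery entirely; your appeal to Lemma~\ref{xorsum} to verify that the shifted pair $(p\veebar r,q\veebar s)$ stays in the same commutation sector is exactly the index-shift the paper performs.

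The step you should tighten is the final structure-constant identification. You correctly find that the coefficient of $\Gamma^{p,q}$ carries $f_{p\veebar r,q\veebar s}^{r,s}$, the constant for $\Gamma^{p\veebar r,q\veebar s}\Gamma^{r,s}$, and from $f_{p,q}^{r,s}f_{p\veebar r,q\veebar s}^{r,s}=1$ together with $f_{p,q}^{r,s}\in\{i,-i\}$ you obtain $f_{p\veebar r,q\veebar s}^{r,s}=-f_{p,q}^{r,s}$. But the lemma's $f^{pq}_{rs}$ is, per Theorem~\ref{mult}'s convention and the paper's own proof, $f_{r,s}^{p,q}$, the constant for the reversed product $\Gamma^{r,s}\Gamma^{p,q}$. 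To close the argument you need the second, independent sign flip $f_{r,s}^{p,q}=-f_{p,q}^{r,s}$ that follows from anti-commutativity on this sector; composing the two gives the required $f_{p\veebar r,q\veebar s}^{r,s}=f_{r,s}^{p,q}$. Your phrase ``inversion is a sign flip'' collapses these two distinct flips into one, and as written a reader could reasonably conclude you were asserting $f_{p\veebar r,q\veebar s}^{r,s}=f_{p,q}^{r,s}$, which is off by a sign. The endpoint is correct, but you should state the anti-commutation relation $f_{r,s}^{p,q}=-f_{p,q}^{r,s}$ explicitly before matching to the lemma.
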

\begin{proof}
Start with
\begin{align}
H'=
\sum_{p,q}h_{p,q}
\pare{\cos(\phi) I - i \sin(\phi) \Gamma^{r,s}}
\Gamma^{p,q}
\nonumber \\
\pare{\cos(\phi) I + i \sin(\phi) \Gamma^{r,s}}
\ .
\end{align}
Expand all terms
\begin{align}
H'=
\sum_{p,q}h_{p,q}
\bigg(
\cos^2(\phi) \Gamma^{p,q}
 +\sin^2(\phi) \Gamma^{r,s}\Gamma^{p,q}
\Gamma^{r,s}
\nonumber \\
+
i \sin(\phi) \cos(\phi)
\pare{
 \Gamma^{p,q}
\Gamma^{r,s}
- \Gamma^{r,s}\Gamma^{p,q}
}
\bigg)
\ .
\end{align}
Using theorem \ref{mult}
and evaluating $\Gamma^{r,s}\Gamma^{p,q}
\Gamma_{r,s}=f_{rs}^{pq}\Gamma_{p\veebar r,q\veebar s}\Gamma_{r,s}
=f_{rs}^{pq}f_{p\veebar r,q\veebar s}^{rs}\Gamma^{p,q}
$,
\begin{align}
H'=
\sum_{p,q}h_{p,q}
\bigg(
\pare{
\cos^2(\phi)
 +\sin^2(\phi) f_{rs}^{pq}f_{p\veebar r,q\veebar s}^{rs}
}
\Gamma^{p,q}
\nonumber \\
+
i \sin(\phi) \cos(\phi)
\pare{
f_{pq}^{rs} 
-
f_{rs}^{pq} 
}
\Gamma^{p\veebar r,q\veebar s}
\bigg)
\ .
\end{align}
Using lemmas \ref{commutation} and \ref{tripleproduct}, evaluate $f_{p\veebar r,q\veebar s}^{rs}=(-1)^{q\cdot r  - p \cdot s}f_{pq}^{rs}$, $f_{pq}^{rs}=(-1)^{q\cdot r  - p \cdot s}f_{rs}^{pq}$, and $(f_{rs}^{pq})^2=(-1)^{q\cdot r  - p \cdot s}$ to obtain
\begin{align}
H'=
\sum_{p,q}h_{p,q}
\bigg(
\pare{
\cos^2(\phi)
 +\sin^2(\phi) (-1)^{q\cdot r  - p \cdot s}
}
\Gamma^{p,q}
\nonumber \\
+
i \sin(\phi) \cos(\phi)
\pare{
1
-
(-1)^{q\cdot r  - p \cdot s}
}
f_{pq}^{rs} 
\Gamma^{p\veebar r,q\veebar s}
\bigg)
\ .
\end{align}
Split the sum and replace $\pare{
1
-
(-1)^{q\cdot r  - p \cdot s}
}$ with $2 \ \delta\pare{q\cdot r  - p \cdot s \stackrel{2}{=}1}$,
\begin{align}
H'=
\sum_{p,q}h_{p,q}
\pare{
\cos^2(\phi)
 +\sin^2(\phi) (-1)^{q\cdot r  - p \cdot s}
}
\Gamma^{p,q}
\nonumber \\
+
\sum_{p,q}^{q\cdot r  - p \cdot s \stackrel{2}{=}1}h_{p,q} \ 
2i \sin(\phi) \cos(\phi)
f_{pq}^{rs} 
\Gamma^{p\veebar r,q\veebar s}
\ ,
\end{align}
where
\begin{align}
\sum_{p,q}^{q\cdot r  - p \cdot s \stackrel{2}{=}1}
=
\sum_{p=0}^{2^n-1}
\sum_{q=0}^{2^n-1}
\delta\pare{
q\cdot r  - p \cdot s \stackrel{2}{=}1
}
\end{align}
is shorthand notation.
Shift the indices $p,q \rightarrow p\veebar r,q\veebar s$ in the second sum and evaluate $(q\veebar s)\cdot r  - (p\veebar r)\cdot s\stackrel{2}{=}q\cdot r  - p\cdot s$,
\begin{align}
H'=
\sum_{p,q}h_{p,q}
\pare{
\cos^2(\phi)
 +\sin^2(\phi) (-1)^{q\cdot r  - p \cdot s}
}
\Gamma^{p,q}
\nonumber \\
+
\sum_{p,q}^{q\cdot r  - p\cdot s \stackrel{2}{=}1}h_{p\veebar r,q\veebar s} \ 
2i \sin(\phi) \cos(\phi)
f_{p\veebar r,q\veebar s}^{rs} 
\Gamma^{p,q}
\ .
\end{align}
Thus, using the trigonometry identities $1=\cos^2(\phi)+\sin^2(\phi)$, $\cos(2\phi)=\cos^2(\phi)-\sin^2(\phi)$, $\sin(2\phi)=2\sin(\phi)\cos(\phi)$ and $f_{p\veebar r,q\veebar s}^{rs} = f_{rs}^{pq} $,
\begin{align}
(h')_{p,q}
=
\delta\pare{p\cdot s - q\cdot r \stackrel{2}{=} 0 }  h_{p,q}
\nonumber \\ +  
\delta\pare{p\cdot s - q\cdot r \stackrel{2}{=} 1 }
\nonumber \\ \bigg(
h_{p,q} \cos(2\phi)
+
i\sin(2\phi)  f^{pq}_{rs}h_{p\veebar r,q\veebar s}
\bigg)
\ .
\end{align}
\end{proof}
\begin{lemma}
\textbf{Norm Preserving Rotation.} 

From lemma \ref{unitarytransformationsingleparameter}, let
\begin{align}
(\mathcal{X}')_{p}^{rs}
=
\sum_{q}^{q\cdot r - p\cdot s \stackrel{2}{=} 1}
\pare{
(h_{p,q}')^2
-
(h_{p\veebar r,q\veebar s}')^2
}
\ ,
\end{align}
and
\begin{align}
(\mathcal{Y}')_{p}^{rs}
=
\sum_{q}^{q\cdot r - p\cdot s \stackrel{2}{=} 1}
2i f^{pq}_{rs}h_{p\veebar r,q\veebar s}'
h_{p,q}'
\ ,
\end{align}
and $(\mathcal{X})_{p}^{rs}$ and $(\mathcal{Y})_{p}^{rs}$ be defined the same except replacing $h'$ with $h$, then $(\mathcal{X}')_{p}^{rs} = \cos(4\phi)\mathcal{X}_{p}^{rs}
+ \sin(4\phi)\mathcal{Y}_{p}^{rs}$ and 
$(\mathcal{Y}')_{p}^{rs} = \cos(4\phi)\mathcal{Y}_{p}^{rs}
- \sin(4\phi)\mathcal{X}_{p}^{rs}$.
\label{normpreservingrotation}
\end{lemma}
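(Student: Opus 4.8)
\emph{Proof proposal.} The plan is to reduce the claim to a single $2\times 2$ rotation acting on the coefficient pair $\bigl(h_{p,q},h_{p\veebar r,q\veebar s}\bigr)$ and then to sum over $q$. Fix $r,s$ and $p$ and, for each $q$ in the anti-commuting set $\{q:\ q\cdot r-p\cdot s\stackrel{2}{=}1\}$, abbreviate $a_q=h_{p,q}$, $b_q=h_{p\veebar r,q\veebar s}$, with $a_q',b_q'$ the corresponding primed coefficients. First I would note that this index set is stable under $q\mapsto q\veebar s$ paired with $p\mapsto p\veebar r$: by lemma \ref{xorsum}, $(q\veebar s)\cdot r-(p\veebar r)\cdot s\stackrel{2}{=}q\cdot r-p\cdot s$, so $(p\veebar r,q\veebar s)$ is again anti-commuting with $(r,s)$. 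Hence lemma \ref{unitarytransformationsingleparameter}, applied once at the index pair $(p,q)$ and once at $(p\veebar r,q\veebar s)$, gives the closed linear system
\begin{align}
a_q' &= \cos(2\phi)\,a_q + i\sin(2\phi)\,f^{pq}_{rs}\,b_q\ ,\nonumber\\
b_q' &= \cos(2\phi)\,b_q + i\sin(2\phi)\,f^{(p\veebar r)(q\veebar s)}_{rs}\,a_q\ .
\end{align}

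The only genuinely arithmetic step is to pin down the two structure constants occurring here. Since $\Gamma^{p,q}$ and $\Gamma^{r,s}$ anti-commute on this set, $(f^{pq}_{rs})^2=(-1)^{q\cdot r-p\cdot s}=-1$ (as used in the proof of lemma \ref{unitarytransformationsingleparameter}), so $\eta_q:=i f^{pq}_{rs}$ is real with $\eta_q^2=1$. Combining the relations already extracted in that proof with the swap rule $f_{a,b}^{c,d}=(-1)^{b\cdot c-a\cdot d}f_{c,d}^{a,b}$ from the proof of lemma \ref{commutation} and with lemma \ref{xorsum}, I would show $f^{(p\veebar r)(q\veebar s)}_{rs}=-f^{pq}_{rs}$, the extra minus sign being precisely $(-1)^{p\cdot s-q\cdot r}=-1$ on the anti-commuting set. (This sign is in fact forced independently: by lemma \ref{unitarytransformationsingleparameter} the coefficient map is block diagonal with $\{(p,q),(p\veebar r,q\veebar s)\}$ occupying its own genuine $2\times 2$ block — the two indices are distinct once $(r,s)\neq(0,0)$, which holds whenever the set is nonempty — and $H\mapsto UHU^\dagger$ preserves $\Tr(H^2)=2^n\sum_{p,q}h_{p,q}^2$ by lemma \ref{orth1}, so each block is orthogonal, which forces its off-diagonal entries to be opposite.) Thus the block is the rotation $a_q'=\cos(2\phi)a_q+\eta_q\sin(2\phi)b_q$, $b_q'=-\eta_q\sin(2\phi)a_q+\cos(2\phi)b_q$.

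With this in hand the rest is a short double-angle expansion. Writing $C=\cos(2\phi)$, $S=\sin(2\phi)$ and using $\eta_q^2=1$,
\begin{align}
(a_q')^2-(b_q')^2 &= (C^2-S^2)(a_q^2-b_q^2) + 4CS\,\eta_q a_q b_q\ ,\nonumber\\
2\eta_q\,a_q'b_q' &= 2\eta_q(C^2-S^2)a_q b_q - 2CS\,(a_q^2-b_q^2)\ .
\end{align}
Since $C^2-S^2=\cos(4\phi)$ and $2CS=\sin(4\phi)$, and since the structure constant $f^{pq}_{rs}$ is untouched by the transformation (so $2\eta_q a_q'b_q'=2i f^{pq}_{rs}b_q'a_q'$), summing over all $q$ in the anti-commuting set and recognizing $\sum_q(a_q^2-b_q^2)=\mathcal{X}_p^{rs}$ and $\sum_q 2i f^{pq}_{rs}a_q b_q=\mathcal{Y}_p^{rs}$ yields $(\mathcal{X}')_p^{rs}=\cos(4\phi)\mathcal{X}_p^{rs}+\sin(4\phi)\mathcal{Y}_p^{rs}$ and $(\mathcal{Y}')_p^{rs}=\cos(4\phi)\mathcal{Y}_p^{rs}-\sin(4\phi)\mathcal{X}_p^{rs}$, as claimed.

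The main obstacle is the identity $f^{(p\veebar r)(q\veebar s)}_{rs}=-f^{pq}_{rs}$ together with the reality of $i f^{pq}_{rs}$: this is the one place where anti-commutativity — the very condition defining the summation set — is actually used, and where one must track the bitwise $\cdot,\land,\veebar$ bookkeeping carefully rather than merely invoking a trigonometric identity. Everything downstream of it is mechanical.
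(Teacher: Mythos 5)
Your argument follows the same route as the paper's proof: you invoke lemma \ref{unitarytransformationsingleparameter} at the paired indices $(p,q)$ and $(p\veebar r,q\veebar s)$, use the stability of the anti-commuting set under that shift, the reality of $if^{pq}_{rs}$ from $(f^{pq}_{rs})^2=-1$, and the sign identity $f^{p\veebar r,q\veebar s}_{rs}=-f^{pq}_{rs}$, then finish with the double-angle identities — exactly the ingredients the paper combines, merely repackaged as an explicit $2\times2$ block. The parenthetical derivation of that sign from the invariance of $\Tr(H^2)$ under conjugation (via lemma \ref{orth1}) is a tidy independent check that the paper does not spell out, but it does not change the substance of the proof.
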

\begin{proof}
Starting from lemma \ref{unitarytransformationsingleparameter},  analyze
\begin{align}
\sum_{q}^{q\cdot r - p\cdot s \stackrel{2}{=} 1}
(h_{p,q}')^2
= \nonumber \\
\sum_{q}^{q\cdot r - p\cdot s \stackrel{2}{=} 1}
\bigg(
h_{p,q} \cos(2\phi)
+
i\sin(2\phi)  f^{pq}_{rs}h_{p\veebar r,q\veebar s}
\bigg)^2
\ .
\end{align}
Using $\delta(q\cdot r - p \cdot s \stackrel{2}{=}1)(if^{pq}_{rs})^2 = \delta(q\cdot r - p \cdot s \stackrel{2}{=}1)$, expand the product
\begin{align}
\sum_{q}^{q\cdot r - p\cdot s \stackrel{2}{=} 1}
(h_{p,q}')^2
= \nonumber \\
\sum_{q}^{q\cdot r - p\cdot s \stackrel{2}{=} 1}
\bigg(
h_{p,q}^2 \cos^2(2\phi)
+
\sin^2(2\phi) h_{p\veebar r,q\veebar s}^2
\nonumber \\ +
2i\sin(2\phi) \cos(2\phi) f^{pq}_{rs}h_{p\veebar r,q\veebar s}
h_{p,q} 
\bigg)
\ .
\end{align}
Evaluate
\begin{align}
(\mathcal{X}')_{p}^{rs}
=
\sum_{q}^{q\cdot r - p\cdot s \stackrel{2}{=} 1}
\nonumber \\
\bigg(
h_{p,q}^2 \cos^2(2\phi)
+
\sin^2(2\phi) h_{p\veebar r,q\veebar s}^2
\nonumber \\ +
2i\sin(2\phi) \cos(2\phi) f^{pq}_{rs}h_{p\veebar r,q\veebar s}
h_{p,q} 
\bigg)
\nonumber \\
-
\bigg(
h_{p\veebar r,q\veebar s}^2 \cos^2(2\phi)
+
\sin^2(2\phi) h_{p,q}^2
\nonumber \\ +
2i\sin(2\phi) \cos(2\phi) f^{p\veebar r,q\veebar s}_{rs}
h_{p,q} h_{p\veebar r,q\veebar s}
\bigg)
\ .
\end{align}
Using the trigonometry identities $\cos(4\phi)=\cos^2(2\phi)-\sin^2(2\phi)$, $\sin(4\phi)=2\sin(2\phi)\cos(2\phi)$ and $\delta\pares{q\cdot r - p\cdot s \stackrel{2}{=} 1}$ $f^{p\veebar r,q\veebar s}_{rs}=-\delta\pares{q\cdot r - p\cdot s \stackrel{2}{=} 1}f^{pq}_{rs}$,
\begin{align}
(\mathcal{X}')_{p}^{rs}
=
\sum_{q}^{q\cdot r - p\cdot s \stackrel{2}{=} 1}
\bigg(
\cos(4\phi)
\pare{h_{p,q}^2-h_{p\veebar r,q\veebar s}^2}
\nonumber \\ +
2i\sin(4\phi)  f^{pq}_{rs}h_{p\veebar r,q\veebar s}
h_{p,q} 
\bigg)
\ .
\end{align}
Thus $(\mathcal{X}')_{p}^{rs} = \cos(4\phi)\mathcal{X}_{p}^{rs}
+ \sin(4\phi)\mathcal{Y}_{p}^{rs}$.

Evaluate
\begin{align}
(\mathcal{Y}')_{p}^{rs}
=
\sum_{q}^{q\cdot r - p\cdot s \stackrel{2}{=} 1}
2i f^{pq}_{rs}
\nonumber \\
\bigg(
h_{p,q} \cos(2\phi)
+
i\sin(2\phi)  f^{pq}_{rs}h_{p\veebar r,q\veebar s}
\bigg)
\nonumber \\
\bigg(
h_{p\veebar r,q\veebar s} \cos(2\phi)
+
i\sin(2\phi)  f^{p\veebar r,q\veebar s}_{rs}h_{p,q}
\bigg)
\ .
\end{align}
Expand the product
\begin{align}
(\mathcal{Y}')_{p}^{rs}
=
\sum_{q}^{q\cdot r - p\cdot s \stackrel{2}{=} 1}
2i f^{pq}_{rs}
\bigg(
\nonumber \\
\pare{
\cos^2(2\phi) 
-\sin^2(2\phi)  f^{pq}_{rs}
f^{p\veebar r,q\veebar s}_{rs}
}
h_{p,q}h_{p\veebar r,q\veebar s}
\nonumber \\ + 
i\sin(2\phi)\cos(2\phi)
\pare{
f^{pq}_{rs}
h_{p\veebar r,q\veebar s}^2 
+
f^{p\veebar r,q\veebar s}_{rs}h_{p,q}^2
}
\bigg)
\ .
\end{align}
Using the trigonometry identities $\cos(4\phi)=\cos^2(2\phi)-\sin^2(2\phi)$, $\sin(4\phi)=2\sin(2\phi)\cos(2\phi)$ and $\delta\pares{q\cdot r - p\cdot s \stackrel{2}{=} 1}f^{p\veebar r,q\veebar s}_{rs}=-\delta\pares{q\cdot r - p\cdot s \stackrel{2}{=} 1}f^{pq}_{rs}$ and $\delta\pares{q\cdot r - p\cdot s \stackrel{2}{=} 1}$ $(f^{pq}_{rs})^2=-\delta\pares{q\cdot r - p\cdot s \stackrel{2}{=} 1}$,
\begin{align}
(\mathcal{Y}')_{p}^{rs}
=
\sum_{q}^{q\cdot r - p\cdot s \stackrel{2}{=} 1}
\bigg(
\cos(4\phi) 
2i f^{pq}_{rs}
h_{p,q}h_{p\veebar r,q\veebar s}
\nonumber \\ + 
\sin(4\phi)
\pare{
h_{p\veebar r,q\veebar s}^2 
-h_{p,q}^2
}
\bigg)
\ .
\end{align}
Thus $(\mathcal{Y}')_{p}^{rs} = \cos(4\phi)\mathcal{Y}_{p}^{rs}
- \sin(4\phi)\mathcal{X}_{p}^{rs}$.
\end{proof}
\begin{theorem}
\textbf{Diagonalization scheme in the Clifford Algebra Representation using Local Single Unitary parameters.}

Let $U  = \cos(\phi) I - i \sin(\phi) \Gamma_{r,s}$ be a single parameterized unitary transformation, $h_{p,q}$, and $h_{p,q}'$ be real,
\begin{align}
H=\sum_{p,q}h_{p,q}\Gamma^{p,q}
&&
H'=\sum_{p,q}h_{p,q}'\Gamma^{p,q}
\ ,
\end{align}
and $\mathcal{S}$ be a local diagonalization scheme that computes $H'=U(\phi)HU^\dagger(\phi)$ for many iterations, where at each iteration of mapping $h_{p,q}\rightarrow h_{p,q}'$, $\phi$ is picked to maximize the diagonal norm
\begin{align}
\sum_{q}(h_{0,q}')^2
\ ,
\end{align}
or $\phi=0$ if no improvement of the diagonal norm is possible and $\Gamma_{r,s}$ has unrestricted unique choice for each iteration, then any $H$ can be fully diagonlized up to some error bound using $\mathcal{S}$.
\label{diagonalizationschemecliffordalgebrarepresentationlocalsingleunitaryparameters}
\end{theorem}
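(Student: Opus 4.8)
The plan is to exhibit a quantity that $\mathcal{S}$ drives monotonically toward a conjugation‑invariant ceiling, so that the run converges, and to read the ``error bound'' off as the residual at that limit.

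First identify the ceiling. Each step of $\mathcal{S}$ is a unitary conjugation, so $\Tr\!\big((H')^2\big)=\Tr(H^2)$, and by the orthogonality relation of Lemma \ref{orth1} this equals $2^m\sum_{p,q}h_{p,q}^2$; hence the total weight $T:=\sum_{p,q}h_{p,q}^2$ is constant along the entire run and the orbit stays on the sphere $\{H : T(H)=T\}$. Next, the diagonal weight $D:=\sum_q h_{0,q}^2$ is precisely the quantity $\mathcal{S}$ maximizes at each step, and Lemma \ref{normpreservingrotation} gives the exact one‑step law: for fixed $(r,s)$ with $r\neq 0$, conjugation by $U^{r,s}(\phi)$ changes $D$ by $\tfrac12\big[(\cos 4\phi-1)\,\mathcal{X}_0^{rs}+\sin 4\phi\,\mathcal{Y}_0^{rs}\big]$, whose maximum over $\phi$ is $\tfrac12\big(\sqrt{(\mathcal{X}_0^{rs})^2+(\mathcal{Y}_0^{rs})^2}-\mathcal{X}_0^{rs}\big)\ge 0$. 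Since $\mathcal{S}$ either realizes this optimum for the best $(r,s)$ or takes $\phi=0$ and changes nothing, the sequence $D_n:=D(H_n)$ is non‑decreasing; being bounded above by $T$, it converges, so the off‑diagonal residue $R_n:=T-D_n=\sum_{p\neq 0,q}h_{p,q}^2$ decreases to some $R_\infty\ge 0$. That $R_\infty$ is the error bound: the run converges, and with $\epsilon=1-D_n/T=R_n/T$ the stopping rule $\epsilon<\eta$ is met in finitely many steps for every threshold $\eta>R_\infty/T$.

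The content beyond this, and the step I expect to be the main obstacle, is an improving‑move lemma: so long as $H$ is not yet diagonal, some $(r,s)$ strictly increases $D$, i.e.\ satisfies $\mathcal{Y}_0^{rs}\neq 0$ or $\mathcal{X}_0^{rs}<0$; granting this, a compactness argument on the sphere (the best one‑step gain $I(H)=\tfrac12\max_{r\neq 0,s}\big(\sqrt{(\mathcal{X}_0^{rs})^2+(\mathcal{Y}_0^{rs})^2}-\mathcal{X}_0^{rs}\big)$ is continuous and positive off the diagonal locus, while $I(H_n)=D_{n+1}-D_n\to 0$) then forces $R_\infty=0$. The natural attack takes an off‑diagonal $h_{p,q}\neq 0$ (so $p\neq 0$), sets $r=p$ — the only row for which $h_{p,q}$ mixes with a diagonal coefficient $h_{0,q\veebar s}$ — and optimizes $\phi$ and $s$ via Lemma \ref{normpreservingrotation} using the explicit structure constant $f^{0,q}_{p,s}=(i)^{-p\cdot q}(-1)^{p\cdot(s\land q)}$ read off from Theorem \ref{mult}. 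The difficulty is that a single $U^{p,s}(\phi)$ rotates $2^{m-1}$ pairs at once and their contributions to $\mathcal{X}_0^{ps},\mathcal{Y}_0^{ps}$ can cancel, so the lemma is delicate: one must either average the identities of Lemma \ref{normpreservingrotation} over the admissible $s$, or rule out fixed configurations with $\mathcal{Y}_0^{rs}=0$ and $\mathcal{X}_0^{rs}\ge 0$ for every $(r,s)$. (If the $(r,s)$ and $\phi$ need not be chosen greedily the strong form is immediate, since the $4^m-1$ matrices $\Gamma^{rs}$ with $(r,s)\neq(0,0)$ are linearly independent traceless Hermitian operators, hence a basis of the traceless self‑adjoint operators, so the one‑parameter families $U^{rs}(\phi)=\cos\phi\,I-i\sin\phi\,\Gamma^{rs}$ generate the whole special unitary group.)
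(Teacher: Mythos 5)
The structure you build—conserved total weight $T$ from $\Tr(H^2)$ via Lemma \ref{orth1}, monotone non-decreasing diagonal weight $D_n$ from the one-step rotation formula of Lemma \ref{normpreservingrotation}, hence convergence of $D_n$ and of the residue $R_n=T-D_n$—is a cleaner and more honest architecture than what the paper actually writes down. The paper's proof uses the same two lemmas (\ref{unitarytransformationsingleparameter} and \ref{normpreservingrotation}) and the same observation that maximizing $(\mathcal{X}')^{rs}_{0}$ forces $(\mathcal{Y}')^{rs}_{0}=0$, but it never phrases the argument as a monotone-bounded convergence; instead it tries to argue directly that an improving move is always available, via a case split on the sign of $\mathcal{X}^{rs}_{0}$ when $(\mathcal{Y}')^{rs}_{0}=0$. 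Where $\mathcal{X}^{rs}_{0}<0$, the paper notes $\phi=\pm 45\degree$ swaps diagonal and off-diagonal rows; where $\mathcal{X}^{rs}_{0}>0$, it treats the special case of a single nonzero element per row (with $r=p$, $s=a\veebar b$) and then asserts, without detail, that one can ``extrapolate to more general cases by performing norm preserving rotations.'' That final sentence is precisely the improving-move lemma you flag as the main obstacle; the paper does not prove it any more than you do.

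So: you have not missed the paper's idea; the paper's idea is essentially your improving-move step, and the paper's attempt at it is at least as incomplete as your sketch. What your approach buys is the separation of concerns—conservation law, monotonicity, convergence, then a clearly isolated open lemma plus a compactness argument to upgrade convergence of the gain sequence into vanishing of the residue. What it costs is that the compactness step needs $I(H_n)=D_{n+1}-D_n$, i.e.\ that $\mathcal{S}$ is greedy over $(r,s)$ as well as over $\phi$; the theorem's phrase ``$\Gamma_{r,s}$ has unrestricted unique choice'' is ambiguous on this, and you should state the greediness assumption explicitly. You should also be aware of the paper's explicit caveat (Theorem \ref{uniquenessofdiagonalform}) that the diagonal form reached is not unique and the theorem only claims diagonalization ``up to some error bound,'' which is consistent with your reading of $R_\infty$ as the error; but neither you nor the paper establish $R_\infty=0$, so the theorem as stated is not fully proved by either argument.
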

\begin{proof}
From theorem \ref{diagonalelements}, it was shown $h_{0,q}$ contains all exclusive information of the diagonal elements, thus maximizing the first row $p=0$ in the gamma representation is equivalent to maximizing the diagonal norm in the traditional matrix representation. Starting from lemmas \ref{unitarytransformationsingleparameter} and \ref{normpreservingrotation}
\begin{align}
\sum_{q}^{q\cdot r - p\cdot s \stackrel{2}{=} 0}
(h_{p,q}')^2
=
\sum_{q}^{q\cdot r - p\cdot s \stackrel{2}{=} 0}
h_{p,q}^2
\ ,
\end{align}
remains unchanged using $U(\phi)$, thus
\begin{align}
(\mathcal{X}')_{p}^{rs}
=
\sum_{q}^{q\cdot r - p\cdot s \stackrel{2}{=} 1}
\pare{
(h_{p,q}')^2
-
(h_{p\veebar r,q\veebar s}')^2
}
\\ = \cos(4\phi)\mathcal{X}_{p}^{rs}
+ \sin(4\phi)\mathcal{Y}_{p}^{rs}
\ ,
\end{align}
and
\begin{align}
(\mathcal{Y}')_{p}^{rs}
=
\sum_{q}^{q\cdot r - p\cdot s \stackrel{2}{=} 1}
2i f^{pq}_{rs}h_{p\veebar r,q\veebar s}'
h_{p,q}'
\\= \cos(4\phi)\mathcal{Y}_{p}^{rs}
- \sin(4\phi)\mathcal{X}_{p}^{rs}
\ ,
\end{align}
are the only elements that change when mapping $h_{p,q}\rightarrow h_{p,q}'$. To maximize the diagonal norm, set $p=0$ and find $\phi$ that sets $(\mathcal{X}')_{p=0}^{rs}$ to the largest positive value, because $(\mathcal{X}')_{p=0}^{rs}$ is the difference between the diagonal norm and its off-diagonal counterpart. Since $(\mathcal{X}')_{p}^{rs}$ and $(\mathcal{Y}')_{p}^{rs}$ have their own norm preserved under their respective two-dimensional rotation using the angle $\phi$, if $(\mathcal{X}')_{p=0}^{rs}$ is maximized, then $(\mathcal{Y}')_{p=0}^{rs}$ must be zero.

Since $(\mathcal{Y}')_{p=0}^{rs}$ is the inner product of the diagonal row vector $h_{0,q}$ and off-diagonal row vector $if_{rs}^{0,q} h_{r,q\veebar s}$ summed over $q$ satisfying $q\cdot r \stackrel{2}{=} 1$, then there are only two cases when $(\mathcal{Y}')_{p=0}^{rs}=0$. Either all of $h_{0,q}$ or $h_{r,q\veebar s}$ are zero, or $h_{0,q}$ and $h_{r,q\veebar s}$ are vector orthogonal under the metric $if_{rs}^{0,q}\delta(q\cdot r \stackrel{2}{=} 1)$. If $h_{0,q}$ and $h_{r,q\veebar s}$ are vector orthogonal, then there does not exist a $\phi$ that can further increase the diagonal norm.

If $h_{0,q}$ and $h_{r,q\veebar s}$ are vector orthogonal under the metric $if_{rs}^{0,q}\delta(q\cdot r \stackrel{2}{=} 1)$ and $\mathcal{X}^{r,s}_0$ is negative, then $\phi = \pm 45 \degree$, which performs a swap of the elements $h_{0,q}$ with $h_{r,q\veebar s}$ satisfying $q\cdot r \stackrel{2}{=} 1$.

If $h_{0,q}$ and $h_{r,q\veebar s}$ are vector orthogonal under the metric $if_{rs}^{0,q}\delta(q\cdot r \stackrel{2}{=} 1)$ and $\mathcal{X}^{r,s}_0$ is positive, then more explanation is needed. Let us assume for fixed choice of $r=c$, that the diagonal row and off-diagonal row $c$ have only one non-zero element $h_{0,q}=h_{0,a}\delta(q=a)$ and $h_{r,q} = h_{r,q} \delta(q=b)$ respectively, then by lemma \ref{cnot} their exists an $s=a \veebar b$ such that $s\veebar b =a$. The condition $q\cdot r \stackrel{2}{=}1 $ does not restrict our choice in $s$ and $f_{r,s}^{0,q}$ does not modify the magnitude of $\mathcal{Y}^{r,s}_{0}$, because there are only two non-zero elements in the inner product. One can extrapolate to more general cases by performing norm preserving rotations on the single non-zero elements in the diagonal and off-diagonal row to construct multiple non-zero values in each row and their vector orthogonality will remain unchanged using the same inner product.

\end{proof}

\begin{theorem} \textbf{Non-uniqueness of diagonal form.}
Let $n$ be a positive integer, 
\begin{align}
D_n = \sum_{k=0}^{2^n-1} d_k \Gamma^{0,k}
\end{align}
be a diagonal matrix with $2^n$ number of eigenvalues $\lambda_k$, and $d_k$ be its diagonal weights, then uniqueness of $\Gamma^{0,k}$ and $\pm d_k$ is not guaranteed for the same list of $\lambda_k$ and follows table \ref{tableuniquenessdiagonal}.

\begin{table}[!h]
\begin{tabular}{|r|r|r|}
\hline
\text{n} & Is $\Gamma^{0,k}$ unique? & Is $\pm d_k$ unique? \\
\hline
1 & yes & yes \\
\hline
2 & no & yes  \\
\hline
n>2 & no & no  \\
\hline
\end{tabular}
\caption{Uniqueness of diagonal form for various size $n$.}
\label{tableuniquenessdiagonal}
\end{table}
\label{uniquenessofdiagonalform}
\end{theorem}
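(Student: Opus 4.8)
The plan is to reduce the statement to elementary facts about the Walsh--Hadamard transform over $\mathbb{F}_2^n$ and the affine group $\mathrm{AGL}(n,\mathbb{F}_2)$. By theorem~\ref{diagonalelements} the diagonal entries of $D_n$, which are precisely its eigenvalues in their natural positions $i=0,\dots,2^n-1$, satisfy $\lambda_i=\sum_k(-1)^{i\cdot k}d_k$; since lemmas~\ref{xorsum} and~\ref{Spcfccshdmrdtrnsfrm} give $\sum_k(-1)^{i\cdot k}(-1)^{k\cdot q}=\sum_k(-1)^{(i\veebar q)\cdot k}=2^n\delta(i\veebar q)$, this inverts to $d_k=2^{-n}\sum_i(-1)^{i\cdot k}\lambda_i$. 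Hence keeping ``the same list of $\lambda_k$'' but using a different diagonal ordering means replacing the diagonal vector $\lambda$ by $P_\pi\lambda$ for a permutation $\pi\in S_{2^n}$, which replaces $d$ by $d^{(\pi)}=M_\pi d$ with $M_\pi:=2^{-n}HP_\pi H$, $H_{ik}=(-1)^{i\cdot k}$, a matrix independent of the actual eigenvalue values. The support $\{k:d^{(\pi)}_k\ne0\}$ records which $\Gamma^{0,k}$ occur and the multiset $\{|d^{(\pi)}_k|\}$ records the weights up to sign; the two columns of table~\ref{tableuniquenessdiagonal} ask how these change with $\pi$.

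The central lemma I would establish is that if $\pi(i)=Ai\veebar b$ is affine, $A\in\mathrm{GL}(n,\mathbb{F}_2)$ and $b\in\mathbb{F}_2^n$, then $M_\pi$ is monomial (a $\pm1$-signed permutation matrix). Indeed, substituting $j=Ai\veebar b$ in $d^{(\pi)}_k=2^{-n}\sum_i(-1)^{i\cdot k}\lambda_{Ai\veebar b}$ and using $k\cdot(A^{-1}x)\equiv(A^{-T}k)\cdot x\pmod 2$ together with lemma~\ref{xorsum} gives $d^{(\pi)}_k=(-1)^{(A^{-T}k)\cdot b}\,d_{A^{-T}k}$, where lemma~\ref{cnot} supplies the bijectivity bookkeeping. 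So for affine $\pi$ the multiset $\{|d^{(\pi)}_k|\}=\{|d_k|\}$ is unchanged and the support is only relabelled by $k\mapsto A^{-T}k$, hence has invariant cardinality although the set itself may move.

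The three rows then follow. For $n=1$: $|S_2|=2=2^1\,|\mathrm{GL}(1,\mathbb{F}_2)|=|\mathrm{AGL}(1,\mathbb{F}_2)|$, so every reordering is affine and $\{|d_k|\}$ is invariant, while a one-line check shows the swap sends $(d_0,d_1)\mapsto(d_0,-d_1)$, so even the support is literally unchanged --- both answers ``yes''. For $n=2$: $|S_4|=24=2^2(2^2-1)(2^2-2)=|\mathrm{AGL}(2,\mathbb{F}_2)|$ and $\mathrm{AGL}(2,\mathbb{F}_2)$ acts faithfully on its four points, so $S_4=\mathrm{AGL}(2,\mathbb{F}_2)$; thus every reordering is affine and $\pm d_k$ is unique, but the support is not --- for the eigenvalue list $\{1,1,-1,-1\}$ the diagonal $(1,1,-1,-1)$ is $\Gamma^{0,2}$ while $(1,-1,1,-1)$ is $\Gamma^{0,1}$, so the occurring set of $\Gamma^{0,k}$ is not determined. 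For $n>2$: $(2^n)!>2^n\prod_{i=0}^{n-1}(2^n-2^i)=|\mathrm{AGL}(n,\mathbb{F}_2)|$ (e.g.\ $40320>1344$ at $n=3$), so non-affine reorderings exist, and I would exhibit one at $n=3$: the diagonal matrix equal to $2$ at positions $\{000,001,010,100\}$ (not a $2$-flat) and $0$ elsewhere has weights $d=(1,\tfrac12,\tfrac12,0,\tfrac12,0,0,-\tfrac12)$, with magnitude multiset $\{1,\tfrac12,\tfrac12,\tfrac12,\tfrac12,0,0,0\}$ and support of size $5$, whereas putting the four $2$'s on the hyperplane $\{x:x_2=0\}$ gives $d=(1,0,0,0,1,0,0,0)$, magnitude multiset $\{1,1,0,0,0,0,0,0\}$ and support of size $2$; since both have eigenvalue list $\{2,2,2,2,0,0,0,0\}$, both $\Gamma^{0,k}$ and $\pm d_k$ fail to be unique. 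Tensoring the two matrices with $I^{\otimes(n-3)}$ and invoking theorem~\ref{kroneckerproduct}, which merely appends zero weights and relabels $k\mapsto k\oplus0$, extends the counterexample to all $n>3$.

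The part I expect to be the main obstacle is the monomiality lemma of the second paragraph: tracking the sign $(-1)^{(A^{-T}k)\cdot b}$ and the relabelling $k\mapsto A^{-T}k$ cleanly through the $\mathbb{F}_2$-bilinear algebra (where lemmas~\ref{xorsum} and~\ref{cnot} do the real work), together with checking that the $n\ge3$ examples are \emph{genuinely} different, i.e.\ that no accidental cancellation collapses the two magnitude multisets into one. A self-contained fallback that avoids the affine-group input is to verify the ``yes'' rows by brute enumeration over $S_2$ and over a generating set of $S_4$, retaining only the small explicit examples for the ``no'' entries.
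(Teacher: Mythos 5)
Your proposal is correct, and it takes a genuinely different and more systematic route than the paper. The paper argues each row of the table by ad hoc casework: for $n=1$ it writes $D_1=d_0 I+d_1 Z$ explicitly, for $n=2$ it enumerates the three ways of placing $\{-1,-1,1\}$ after $1$ and observes each is a single $\Gamma^{0,k}$ (an argument that is really only carried out for the eigenvalue list $\{1,1,-1,-1\}$), and for $n>2$ it exhibits a single swap of positions $3$ and $4$ in $\text{diag}(Z\otimes I\otimes I)$. Your proof instead packages the problem as the action of $S_{2^n}$ on diagonal vectors conjugated through the Walsh--Hadamard transform, and uses the monomiality lemma for affine permutations together with the order identities $|S_2|=|\mathrm{AGL}(1,\mathbb{F}_2)|$ and $|S_4|=|\mathrm{AGL}(2,\mathbb{F}_2)|$ to deduce the two ``yes'' entries \emph{for all} eigenvalue lists at once, not just the one the paper inspects. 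I verified the monomiality computation $d^{(\pi)}_k=(-1)^{(A^{-T}k)\cdot b}\,d_{A^{-T}k}$, the order counts $24=24$ at $n=2$ and $40320>1344$ at $n=3$, and your $n=3$ weight vectors $(1,\tfrac12,\tfrac12,0,\tfrac12,0,0,-\tfrac12)$ versus $(1,0,0,0,1,0,0,0)$; all are correct. The group-theoretic route buys generality (the ``yes'' rows hold unconditionally), a clean structural reason for why $n=2$ behaves like $n=1$ (namely $S_4\cong\mathrm{AGL}(2,2)$ is a coincidence of small orders), and a tensoring argument that cleanly propagates the counterexample to all $n>3$, which the paper leaves implicit. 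The paper's approach is more elementary and concrete but proves strictly less in the $n=2$ row.
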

\begin{proof}
For $n=1$, 
\begin{align}
D_1 = d_0 I + d_1 Z
=
\begin{pmatrix}
d_0+d_1 && 0 \\
0 && d_0-d_1
\end{pmatrix}
\ .
\end{align}
The spectral function has the degree of freedom that the positions of its eigenvalues can change positions in the matrix, thus
\begin{align}
D_1' = d_0 I - d_1 Z
=
\begin{pmatrix}
d_0-d_1 && 0 \\
0 && d_0+d_1
\end{pmatrix}
\ .
\end{align}
Thus for $n=1$, $\Gamma^{0,k}=I,Z$ is unique and $\pm d_k$ is unique.

For $n=2$, extract the eigenvalues of $Z\otimes Z$ using $\text{diag}(Z\otimes Z)=\{(1)_0,(-1)_1,(-1)_2,(1)_3\}$, where $(1)_0$ means the number 1 in position 0. Since its eigenvalues can be swapped into different positions, then swapping position 1 with position 3 yields $\{(1)_0,(1)_1,(-1)_2,(-1)_3\}=\text{diag}(I\otimes Z)$. Since all traceless diagonal members $Z\otimes Z,Z\otimes I,I\otimes Z$ have the same list of eigenvalues $\{1,1,-1,-1\}$ in different orders, then does their exist a swapping of their eigenvalues such that $\text{diag}(\text{?})=\pm\{\_,\_,\_,\_\}$ does not appear in $\Gamma^{0,k}$? There is none, because of the following reasoning: The first choice of $\pm 1$ in position zero is fixed, because $d_k$ are allowed to have scalar multiplication of $\pm 1$ as a degree of freedom, thus $\text{diag}(\text{?})=\pm\{1,\_,\_,\_\}$. The remaining three empty $\_$ have to choose from the list $\{-1,-1,1\}$. Since there are
\begin{align}
\text{diag}(Z\otimes Z) = \{1,-1,-1,1\}
\\
\text{diag}(I\otimes Z) = \{1,1,-1,-1\}
\\
\text{diag}(Z\otimes I) = \{1,-1,1,-1\}
\end{align}
three existing members, and three placement choices for $\{-1,-1,1\}$ in $\text{diag}(\text{?})=\pm\{1,\_,\_,\_\}$, then there does not exist another diagonal member that cannot be represented in terms of one $\Gamma^{0,k}$.

For $n>2$ start with $n=3$. The following example
\begin{align}
\text{diag}(Z\otimes I\otimes I)
=
\{(1)_0,(1)_1,(1)_2,(1)_3,
\nonumber \\
(-1)_4,(-1)_5,(-1)_6,(-1)_7,\}
\end{align}
has a diagonal member that does exist as a single $\Gamma^{0,k}$ by swapping positions 3 with 4.
\begin{align}
\{(1)_0,(1)_1,(1)_2,(-1)_3,
\nonumber \\
(1)_4,(-1)_5,(-1)_6,(-1)_7,\}
=
\text{diag}\Big(
Z\otimes I\otimes I
\nonumber \\
-\tfrac{1}{4} (I+Z)\otimes(I-Z)\otimes(I-Z)
\nonumber \\
+\tfrac{1}{4} (I-Z)\otimes(I+Z)\otimes(I+Z)
\Big)
\ ,
\end{align}
which cannot be reduced into a single $\Gamma^{0,k}$.
\end{proof}
\begin{lemma}
\textbf{Relative Distance Eigenvalue Measure.} Let $\vec{\lambda}(s)$ be a vector of eigenvalues evaluated for steps $s$, and $\vec{\lambda}_0$ be fixed eigenvalues that ideally $\vec{\lambda}(s)$ should converge to, and
\begin{align}
\text{rdm}(\vec{\lambda}(s),\vec{\lambda}_0)
=
\dfrac{|\vec{\lambda}(s)-\vec{\lambda}_0|}
{\sqrt{|\vec{\lambda}(s)|^2+|\vec{\lambda}_0|^2}}
\end{align}
be a relative distance measure function, then $\text{rdm}(\vec{\lambda}(s),\vec{\lambda}_0)$ is scale invariant under equal scaling of its inputs, is uniquely zero when $\vec{\lambda}(s)=\vec{\lambda}_0$, and its output is bounded between $0,\sqrt{2}$.
\label{relativedistanceeigenvaluemeasure}
\end{lemma}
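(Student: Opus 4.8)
The plan is to treat $\vec{\lambda}(s)$ and $\vec{\lambda}_0$ as ordinary Euclidean vectors (the eigenvalues of a Hermitian matrix are real), abbreviate them as $\vec{a}$ and $\vec{b}$, and verify the three asserted properties by direct manipulation of the norm. Throughout I would assume $\vec{a}$ and $\vec{b}$ are not both the zero vector, so that the denominator $\sqrt{|\vec{a}|^2+|\vec{b}|^2}$ is strictly positive; the degenerate all-zero case is discussed at the end.

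First I would prove scale invariance: replacing $\vec{a}\mapsto c\vec{a}$ and $\vec{b}\mapsto c\vec{b}$ for a nonzero scalar $c$ multiplies the numerator $|\vec{a}-\vec{b}|$ by $|c|$ and the denominator $\sqrt{|\vec{a}|^2+|\vec{b}|^2}$ also by $|c|$, so these factors cancel and $\text{rdm}$ is unchanged. Next, for the characterization of the zero set, I would observe that since the denominator is positive, $\text{rdm}(\vec{a},\vec{b})=0$ holds if and only if $|\vec{a}-\vec{b}|=0$, which by positive-definiteness of the Euclidean norm is equivalent to $\vec{a}=\vec{b}$; this yields the ``uniquely zero when $\vec{\lambda}(s)=\vec{\lambda}_0$'' claim.

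For the bounds, $\text{rdm}\ge 0$ is immediate from nonnegativity of the norm and positivity of the denominator, with the equality case already settled above. For the upper bound I would square and reduce $\text{rdm}^2\le 2$ to the algebraic inequality $|\vec{a}-\vec{b}|^2\le 2\bigl(|\vec{a}|^2+|\vec{b}|^2\bigr)$; expanding $|\vec{a}-\vec{b}|^2=|\vec{a}|^2-2\,\vec{a}\cdot\vec{b}+|\vec{b}|^2$ and rearranging turns this into $0\le |\vec{a}|^2+2\,\vec{a}\cdot\vec{b}+|\vec{b}|^2=|\vec{a}+\vec{b}|^2$, which is manifestly true, with equality exactly when $\vec{a}=-\vec{b}$. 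Hence $0\le \text{rdm}\le\sqrt{2}$, and both endpoints are attained (at $\vec{a}=\vec{b}$ and $\vec{a}=-\vec{b}$ respectively).

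The only delicate point — more a matter of bookkeeping than of genuine difficulty — is the boundary case $\vec{a}=\vec{b}=\vec{0}$, where the formula is $0/0$. I would handle it by noting that the eigenvalue vector of any nonzero Hamiltonian does not vanish under the normalization used in the simulation, so this case does not arise in practice; alternatively one may set $\text{rdm}(\vec{0},\vec{0})=0$ by continuity, after which all three stated properties continue to hold verbatim.
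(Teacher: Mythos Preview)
Your proof is correct. The treatment of scale invariance and of the characterization of the zero set matches the paper's argument essentially verbatim. The upper bound, however, is established by a genuinely different route: the paper expands $\text{rdm}^2 = 1 - \dfrac{2\,\vec{a}\cdot\vec{b}}{|\vec{a}|^2+|\vec{b}|^2}$ and then introduces a trigonometric parameterization $|\vec{a}| = r\cos\psi$, $|\vec{b}| = r\sin\psi$, $\vec{a}\cdot\vec{b} = |\vec{a}|\,|\vec{b}|\cos\phi$, arriving at the closed form $\text{rdm}^2 = 1 - \cos\phi\,\sin(2\psi)$, from which the bounds $0\le\text{rdm}\le\sqrt{2}$ are read off. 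Your reduction of $\text{rdm}^2\le 2$ to the single-line identity $0\le |\vec{a}+\vec{b}|^2$ is more elementary and immediately pinpoints the equality case $\vec{a}=-\vec{b}$; the paper's route, in exchange, yields an explicit angular expression for $\text{rdm}^2$ that makes the full range $[0,2]$ visible at a glance. Your handling of the degenerate $\vec{a}=\vec{b}=\vec{0}$ case is also slightly more careful than the paper's, which mentions the issue in passing via scale invariance but does not resolve the $0/0$ explicitly.
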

\begin{proof}
By scaling $\vec{\lambda}(s),\vec{\lambda}_0\rightarrow c\vec{\lambda}(s),c\vec{\lambda}_0$,
\begin{align}
\text{rdm}(c\vec{\lambda}(s),c\vec{\lambda}_0)
=
\dfrac{|c\vec{\lambda}(s)-c\vec{\lambda}_0|}
{\sqrt{|c\vec{\lambda}(s)|^2+|c\vec{\lambda}_0|^2}}
\\=
\dfrac{c|\vec{\lambda}(s)-\vec{\lambda}_0|}
{\sqrt{c^2|\vec{\lambda}(s)|^2+c^2|\vec{\lambda}_0|^2}}
\\=
\dfrac{c|\vec{\lambda}(s)-\vec{\lambda}_0|}
{c\sqrt{|\vec{\lambda}(s)|^2+|\vec{\lambda}_0|^2}}
\\=
\text{rdm}(\vec{\lambda}(s),\vec{\lambda}_0)
\ .
\end{align}
By expanding $\vec{\lambda}(s),\vec{\lambda}_0$ into their vector components $\lambda_i(s),(\lambda_0)_i$
\begin{align}
\text{rdm}(\vec{\lambda}(s),\vec{\lambda}_0)
=
\sqrt{
\dfrac{
\sum_{i}(\lambda_i(s)-(\lambda_0)_i)^2
}
{
\sum_{i}[(\lambda_i(s))^2+((\lambda_0)_i)^2]
}
}
\ .
\end{align}
All individual terms in the numerator and denominator are positive definite. The term $(\lambda_i(s)-(\lambda_0)_i)^2$ is only zero when $\lambda_i(s)=(\lambda_0)_i$, thus the numerator is only zero when all $(\lambda_i(s)-(\lambda_0)_i)^2$ are zero, e.i. $\vec{\lambda}(s)=\vec{\lambda}_0$. The denominator can only be zero if all $\lambda_i(s),(\lambda_0)_i$ are zero. Since $\text{rdm}(\vec{\lambda}(s),\vec{\lambda}_0)$ is scale invariant under equaling scaling of its inputs, it will never equal zero if both inputs shrink to zero, but relatively converge to different vectors.

By expanding
\begin{align}
[\text{rdm}(\vec{\lambda}(s),\vec{\lambda}_0)]^2
=
\dfrac{|\vec{\lambda}(s)-\vec{\lambda}_0|^2}
{|\vec{\lambda}(s)|^2+|\vec{\lambda}_0|^2}
\\=
\dfrac{
|\vec{\lambda}(s)|^2+|\vec{\lambda}_0|^2
-2\vec{\lambda}(s)\cdot \vec{\lambda}_0
}
{|\vec{\lambda}(s)|^2+|\vec{\lambda}_0|^2}
\\=
1-
\dfrac{
2\vec{\lambda}(s)\cdot \vec{\lambda}_0
}
{|\vec{\lambda}(s)|^2+|\vec{\lambda}_0|^2}
\end{align}
By substituting $|\vec{\lambda}(s)|=\cos(\psi)r$, $|\vec{\lambda}_0|=\sin(\psi)r$, where $0\leq \psi \leq \pi/2$ and $0\leq r$, and $\vec{\lambda}(s)\cdot \vec{\lambda}_0=\cos(\phi)|\vec{\lambda}(s)||\vec{\lambda}_0|$, then
\begin{align}
[\text{rdm}(\vec{\lambda}(s),\vec{\lambda}_0)]^2
\nonumber \\=
1-
\dfrac{
2 \cos(\phi) \sin(\psi)r\cos(\psi)r
}
{(\sin(\psi)r)^2+(\cos(\psi)r)^2}
\ .
\end{align}
By substituting $2\sin(\psi)\cos(\psi)=\sin(2\psi)$, and $\sin^2(\psi)+\cos^2(\psi)=1$
\begin{align}
[\text{rdm}(\vec{\lambda}(s),\vec{\lambda}_0)]^2
=
1-
\cos(\phi) \sin(2\psi)
\ .
\end{align}
Thus $0 \leq \text{rdm}(\vec{\lambda}(s),\vec{\lambda}_0) \leq \sqrt{2}$.
\end{proof}

\end{document}